\newcommand{\ep}{\epsilon}
\newtheorem{theorem}{Theorem}
\newtheorem{lemma}{Lemma}
\begin{document}
\title{A Framework for Vehicle Routing Approximation Schemes in Trees}


\author{Amariah Becker\thanks{Research funded by NSF grant CCF-14-09520}\\
Computer Science Department, Brown University\\  
\texttt{amariah\_becker@brown.edu}
\and Alice Paul\\
Data Science Initiative, Brown University\\
\texttt{alice\_paul@brown.edu}}


\maketitle
\begin{abstract}
We develop a general framework for designing polynomial-time approximation schemes (PTASs) for various vehicle routing problems in trees.  In these problems, the goal is to optimally route a fleet of vehicles, originating at a depot, to serve a set of clients, subject to various constraints.  For example, in {\sc Minimum Makespan Vehicle Routing}, the number of vehicles is fixed, and the objective is to minimize the longest distance traveled by a single vehicle.  Our main insight is that we can often greatly restrict the set of potential solutions without adding too much to the optimal solution cost. This simplification relies on partitioning the tree into clusters such that there exists a near-optimal solution in which every vehicle that visits a given cluster takes on one of a few forms. In particular, only a small number of vehicles serve clients in any given cluster.  By using these coarser building blocks, a dynamic programming algorithm can find a near-optimal solution in polynomial time. We show that the framework is flexible enough to give PTASs for many problems, including {\sc Minimum Makespan Vehicle Routing}, {\sc Distance-Constrained Vehicle Routing}, {\sc Capacitated Vehicle Routing}, and {\sc School Bus Routing}, and can be extended to the multiple depot setting.
\end{abstract}

\section{Introduction}\label{sec:intro}

Vehicle routing problems address the fundamental problem of routing a fleet of vehicles from a common depot to visit a set of clients. These problems arise naturally in many real world settings, and are well-studied across computer science and operations research.  We generalize a class of vehicle routing problems by introducing the notions of \emph{vehicle load}, the problem-specific vehicle constraint (e.g. number of clients visited, distance traveled by the vehicle, client regret, etc.), and \emph{fleet budget}, the problem-specific fleet constraint (e.g. number of vehicles, sum of distances traveled, etc.).  

Most vehicle routing problems can then be framed as either {\sc Min-Max Vehicle Load}: minimize the maximum vehicle load, given a bound $k$ on fleet budget (e.g. {\sc Minimum Makespan Vehicle Routing}) or {\sc Minimum Fleet Budget}: minimize the required fleet budget, given a bound $D$ on vehicle load (e.g. {\sc Distance-Constrained Vehicle Routing}).  In fact, these are two optimization perspectives of the same decision problem: does there exist a solution with maximum vehicle load $D$ and fleet budget $k$?

\subsection{Main Contributions}

We present a framework for designing polynomial time approximation schemes (PTASs) for {\sc Min-Max Vehicle Load} and {\sc Minimum Fleet Budget} in trees.  Tree (and treelike) transportation networks occur in  building and warehouse layouts, mining and logging industries, and along rivers and coastlines~\cite{labbe,karuno97}. Our framework applies directly to {\sc Min-Max Vehicle Load} problems and generates results of the following form.

\begin{theorem}\label{thm:main}
For every $\ep > 0$, there is a polynomial-time algorithm that, given an instance of {\sc Min-Max Vehicle Load} on a tree, finds a feasible solution whose maximum vehicle load is at most $1+\ep$ times optimum.
\end{theorem}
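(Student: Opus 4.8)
The plan is to reduce the optimization problem to an approximate decision problem, establish a structural theorem that restricts the space of solutions, and then exploit that structure with a dynamic program. First I would normalize the instance. By searching over a polynomially-sized grid of candidate values (refined geometrically so that one of them lies within a $(1+\ep)$ factor of optimum), guess a target load $D$ with $\OPT \le D \le (1+\ep)\OPT$; the task then becomes to produce, using fleet budget at most $k$, a feasible solution in which every vehicle has load at most $(1+\ep)D$. I would also root the tree at the depot and binarize it (inserting zero-length edges to bound the degree) so that the subsequent dynamic program has a clean recursive structure.

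Second, and most importantly, I would establish the clustering and structural lemma. Partition the tree into \emph{clusters} so that each cluster is ``light''—its internal extent contributes at most an $\ep$-fraction of $D$ to any vehicle that traverses it—while each cluster boundary is crossed by only a controlled number of tree edges. The key claim to prove is that there is a solution of maximum load at most $(1+\ep)D$ in which, for every cluster, only $f(\ep)$ vehicles actually serve clients inside it, and each such vehicle interacts with the cluster in one of a bounded number of canonical \emph{forms} (for instance, entering and exiting through designated boundary edges and serving a designated sub-portion). I would prove this by starting from an optimal load-$D$ solution and repeatedly rerouting: consolidating the service of each light cluster onto a few vehicles and snapping the resulting routes to canonical forms. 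Because each cluster is light, every such modification adds only $O(\ep D)$ to the load of an affected vehicle; the crux is to arrange the rerouting so that these increases do not accumulate, i.e.\ so that each individual vehicle absorbs extra load from only a bounded number of clusters, keeping its total blow-up within a $(1+\ep)$ factor.

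Third, with the structure in hand, I would design a dynamic program over the cluster decomposition, processing clusters from the leaves toward the depot. The DP state at each cluster boundary records a \emph{profile}: the (rounded) loads and forms of the bounded set of vehicles crossing that boundary, together with the amount of fleet budget consumed so far. Since only $f(\ep)$ vehicles per cluster and a constant number of forms per vehicle need be tracked, and loads can be rounded to integer multiples of $\ep' D$ for a suitable $\ep' = \Theta(\ep)$, the number of distinct profiles is bounded by a polynomial in the input size for fixed $\ep$. The DP merges child profiles at each internal node, checks the load and budget constraints, and returns a feasible assignment minimizing the maximum load within the restricted class; by the structural lemma this is at most $(1+\ep)D \le (1+\ep)^2\OPT$, which after rescaling $\ep$ proves the theorem.

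The main obstacle is the structural lemma, precisely because \textsc{Min-Max Vehicle Load} is a bottleneck objective: the rerouting that reduces the number and complexity of vehicles per cluster must be carried out without overloading any single vehicle. The delicate point is that consolidating service naively can cascade, so that one heavily-used vehicle inherits small increments from many clusters and its load grows far beyond $(1+\ep)D$. The resolution is to make clusters light enough and the canonical forms coarse enough that each vehicle's load grows by at most a single $(1+\ep)$ multiplicative factor overall, independently of how many clusters it visits—and verifying this uniform bound is where the bulk of the technical work will lie.
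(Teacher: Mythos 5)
Your outline follows the same high-level strategy as the paper (binary search over a target load $D$, binarizing the tree, clustering, a structure theorem restricting to canonical per-cluster behaviors, and a leaf-to-root dynamic program over rounded load profiles), but it leaves unproven precisely the step that constitutes the paper's main technical content. You correctly identify that the danger is cascading: when service of each cluster is consolidated onto few vehicles, a single vehicle may inherit $O(\ep D)$ increments from many clusters, and you state that ``verifying this uniform bound is where the bulk of the technical work will lie'' --- but you do not supply the mechanism. The paper's resolution is a global reassignment tool (Lemma~\ref{lem:assign}, the {\sc Bipartite Weight-Capacitated Assignment} lemma): each vehicle is treated as a facility whose capacity equals the total load it \emph{already} spends partially covering the small pieces it touches, and each piece is a client of weight equal to its full coverage cost; an iterative level-raising argument shows an assignment exists in which every facility exceeds its capacity by at most the weight of a \emph{single} piece. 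This is what converts ``$O(\ep D)$ per affected cluster, possibly many clusters per vehicle'' into ``at most one piece's worth of overload per vehicle, total,'' and without it (or an equivalent charging scheme) your structural lemma does not go through. Note also that the paper needs three distinct cluster types (leaf, small, edge) with different consolidation arguments --- small clusters are charged to descendant leaf clusters, and per-vehicle cluster counts are bounded by forcing every client-serving segment to have load $\Omega(\hat{\ep}^3 D)$ --- none of which is recoverable from ``make all clusters uniformly light.''

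A second, smaller gap is in the dynamic program. You propose rounding loads to multiples of $\ep' D$ at each cluster boundary, but a route that merely \emph{passes through} many clusters would then be re-rounded at every boundary it crosses, and since the number of clusters on a root-to-leaf path is not bounded by any function of $\ep$, the accumulated rounding error is unbounded. The paper handles this by \emph{projecting} passing routes to predetermined destinations up the tree and rounding only when a route branches or serves clients, which by the structure theorem happens $O(1/\hat{\ep}^3)$ times per route; you would need some analogous device. Your claim that only a ``bounded set of vehicles'' crosses each cluster boundary is also imprecise --- up to $k$ passing vehicles may cross --- though this is repairable by storing, as the paper does, a histogram counting vehicles per rounded load value.
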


An immediate corollary of Theorem~\ref{thm:main} is the following \emph{bicriteria} result for the associated {\sc Minimum Fleet Budget} problem.

\begin{theorem}
Given an instance of {\sc Minimum Fleet Budget} on a tree, if there exists a solution with fleet budget $k$ and vehicle load $D$, then for any $\ep > 0$, there is a polynomial-time algorithm that finds a solution with fleet budget $k$ and vehicle load at most $(1+\ep)D$.
\end{theorem}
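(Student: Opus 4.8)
The plan is to derive this directly from Theorem~\ref{thm:main} by reinterpreting the given {\sc Minimum Fleet Budget} instance as a {\sc Min-Max Vehicle Load} instance. As noted above, the two problems are dual views of the single decision question of whether a solution with vehicle load $D$ and fleet budget $k$ exists, and I would exploit exactly this correspondence. Concretely, I would keep the same tree, depot, and client set, impose the fleet-budget bound $k$ as the hard constraint, and make the maximum vehicle load the quantity to be minimized. A solution is feasible for this {\sc Min-Max Vehicle Load} instance precisely when it serves all clients using fleet budget at most $k$, which is exactly the feasibility requirement inherited from the {\sc Minimum Fleet Budget} problem.

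The key step, which I would establish next, is that the optimum of this {\sc Min-Max Vehicle Load} instance is at most $D$. This follows immediately from the hypothesis: the assumed solution with fleet budget $k$ and vehicle load $D$ is itself feasible for the min-max instance and has maximum vehicle load $D$, so $\OPT \le D$. With this bound in hand, I would invoke Theorem~\ref{thm:main} with the same $\ep$; in polynomial time it returns a feasible solution---one respecting fleet budget $k$---whose maximum vehicle load is at most $(1+\ep)\OPT \le (1+\ep)D$. Reinterpreting this output back as a {\sc Minimum Fleet Budget} solution yields a solution with fleet budget $k$ and vehicle load at most $(1+\ep)D$, as required.

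I do not expect a genuine obstacle here, since the result is an immediate corollary; the only thing requiring care is the bookkeeping of the reduction. Specifically, I would verify that ``feasible'' means the same thing in both formulations (all clients served, fleet budget within $k$), so that Theorem~\ref{thm:main} is genuinely being applied to a legitimate instance, and that its multiplicative guarantee is stated relative to the min-max optimum rather than relative to $D$ directly. It is worth emphasizing that the algorithm never needs to know $D$ or the optimal max load numerically: the bound $\OPT \le D$ is used only in the analysis, while the PTAS itself operates on the instance blindly, and its $(1+\ep)$ factor then propagates through the inequality $\OPT \le D$ automatically.
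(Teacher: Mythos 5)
Your proposal is correct and matches the paper's intent exactly: the paper states this theorem as an immediate corollary of Theorem~\ref{thm:main}, relying on precisely the duality you describe (the two problems share the same decision version), and your reduction with the observation that $\OPT \le D$ for the induced {\sc Min-Max Vehicle Load} instance is the whole argument. Nothing further is needed.
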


The input to the framework is a rooted tree $T = (V,E)$ with root $r \in V$ and edge lengths $\ell(u,v) \geq 0$ for all $(u,v) \in E$. The root $r$ represents the \emph{depot} at which all vehicles start. Without loss of generality, the set of clients corresponds to the set of leaves in the tree (any subtree without a client can be safely removed from the instance).  Since every edge must then be traversed by at least one vehicle, the problems are equivalent to corresponding tree-cover problems.

As stated, the framework can be customized to a wide range of problems.  In Section~\ref{sec:makespan}, we illustrate in detail how to customize the framework to give a PTAS for the {\sc Minimum Makespan Vehicle Routing} problem of finding $k$ \emph{tours} each starting and ending at a depot $r$ that serve all clients in $T$ such that the \emph{makespan}, the maximum length of any tour, is minimized.  Here, vehicle load is the tour length, and fleet budget is the number of vehicles.  A bicriteria PTAS for the associated {\sc Minimum Fleet Budget} problem, {\sc Distance-Constrained Vehicle Routing}, follows as a corollary.  

We will also show how the framework can be applied to give similar results for other vehicle-routing variants, including {\sc Capacitated Vehicle Routing} (see Section~\ref{sec:capacity}) and {\sc School Bus Routing} (see Section~\ref{sec:school_bus}).  Additionally, we show how to generalize to the multiple depot setting (see Section~\ref{sec:multi}). The breadth of the problems listed highlights the real flexibility and convenience of the presented framework.

At a high level, the framework partitions the tree into \emph{clusters} such that there exists a near-optimal solution that within each cluster has a very simple form, effectively coarsening the solution space.  Then, given this simplified structure, a dynamic program can be designed to find such a near-optimal solution.

The clusters are designed to be \emph{small} enough so that simplifying vehicle routes at the cluster level does not increase the optimal load by too much, but also \emph{large} enough that the (coarsened) solutions can be enumerated efficiently.  To bound the error introduced by this simplification we design a load-reassignment tool that makes cluster coverage adjustments \emph{globally} in the tree.

Finally, standard dynamic programming techniques can result in a large accumulation of rounding error.  To limit the number of times that the load of any single route is rounded, we introduce a \emph{route projection} technique that essentially pays in advance for load that the vehicle \emph{anticipates} accumulating, allowing the dynamic program to round only once instead of many times for this projected load.

\subsection{Related Work}

For trees, {\sc Minimum Makespan Vehicle Routing} is equivalent to {\sc Minimum Makespan Rooted Tree Cover}: the minimum makespan for rooted tree cover is exactly half the minimum makespan for vehicle routing, since tours traverse edges twice.  {\sc Minimum Makespan Rooted Tree Cover} is NP-hard even on star instances but admits an FPTAS if the number, $k$, of subtrees is constant~\cite{sahni} and a PTAS for general $k$~\cite{hochbaum}.  For covering a \emph{general graph} with rooted subtrees, \cite{even} provides a 4-approximation; this bound was later improved to a 3-approximation by~\cite{nagamochi}.  For tree metrics, an FPTAS is known for constant $k$~\cite{xu}, and a $(2+\ep)$-approximation is known for general $k$~\cite{nagamochi}.  In this paper, we improve this to a PTAS.  Although a recent paper of Chen and Marx~\cite{chen_marx} also claimed to present a PTAS, in Appendix~\ref{app:chen_marx} we show that their result is incorrect and cannot be salvaged using the authors' proposed techniques.  Additionally, we compare their approach to our own and describe how we successfully overcome the challenges where their approach fell short.

The associated {\sc Distance-Constrained Vehicle Routing} problem is to minimize the number of tours of length at most $D$ required to cover all client demand.  Even restricted to star instances, this problem is NP-hard, and for tree instances it is hard to approximate to better than a factor of $3/2$~\cite{nagarajan}.  A 2-approximation is known for tree instances, and $O(\log D)$ and $O(\log |S|)$-approximations are known for general metrics, where $S$ is the set of clients~\cite{nagarajan}.  Allowing a multiplicative stretch in the distance constraint, a $(O(\log1/\ep), 1+\ep)$ bicriteria approximation is also known, which finds a solution of at most $O(\log1/\ep)OPT_D$ tours each of length at most $(1+\ep)D$~\cite{nagarajan}, where $OPT_D$ is the minimum number of tours of length at most $D$ required to cover all clients.  We give a $(1, 1+\ep)$ bicriteria PTAS for trees, noting that a true PTAS is unlikely to exist~\cite{nagarajan}.

In the classic {\sc Capacitated Vehicle Routing} each vehicle can cover at most $Q$ clients, and the objective is to minimize the \emph{sum} of tour lengths.  This problem is also NP-hard, even in star instances~\cite{labbe}.  For tree metrics, a 4/3-approximation is known~\cite{becker}, which improves upon the previous best-known approximation ratio of $(\sqrt{41}-1)/4$ by~\cite{asano} and is tight with respect to the best known lower bound.  In this paper, we give a $(1, 1+\ep)$ bicriteria PTAS for trees.  For general metrics, a $(2.5-\frac{1.5}{Q})$-approximation is known~\cite{haimovich}(using \cite{christofides}).

The \emph{regret} of a path is the difference between the path length and the distance between the path endpoints.  The {\sc Min-Max Regret Routing} problem is to cover all clients with $k$ \emph{paths} starting from the depot, such that the maximum regret is minimized. For trees, there is a known 13.5-approximation algorithm~\cite{bock}, which we improve to a PTAS in this paper.  For general graphs there is a $O(k^2)$-approximation algorithm~\cite{friggstad14}.  

In the related {\sc School Bus Routing} problem, there is a bound $R$ on the regret of each path and the goal is to find the minimum number of paths required to cover all client demand. For general graphs,~\cite{friggstad17} provide an LP-based 15-approximation algorithm, improving upon the authors' previous 28.86-approximation algorithm~\cite{friggstad14}.  In trees, there exists a 3-approximation algorithm for the uncapacitated version of this problem and a 4-approximation algorithm for the capacitated version~\cite{bock}. Additionally, there is a (3/2) inapproximability bound \cite{bock}. A true PTAS is therefore unlikely to exist for trees. Instead, we give a $(1, 1+\ep)$ bicriteria PTAS.

\section{Preliminaries}\label{sec:preliminaries}

Let $OPT$ denote the value of an optimum solution. For a minimization problem, a polynomial-time $\alpha$-approximation algorithm is an algorithm that finds a solution of value at most $\alpha\cdot OPT$ and runs in time that is polynomial in the size of the input.  A polynomial-time approximation scheme (PTAS) is a family of $(1+\ep)$-approximation algorithms indexed by $\ep> 0$ such that for each $\ep$, the algorithm runs in time polynomial in the input size, but may depend arbitrarily on $\ep$. 

In a rooted tree, the \emph{parent} of a vertex $v$, denoted $p(v)$, is the vertex adjacent to $v$ in the shortest path from $v$ to $r$ (the parent of $r$ is undefined).  If $u = p(v)$ then $v$ is a \emph{child} of $u$.  The parent edge of a vertex $v$ is the edge $(p(v),v)$ (undefined for $v=r$). The \emph{ancestors} of vertex $v$ are all vertices (including $v$ and $r$) in the shortest $v$-to-$r$ path and the \emph{descendants} of $v$ are all vertices $u$ such that $v$ is an ancestor of $u$.  We assume every vertex has at most two children. If vertex $v$ has $l>2$ children $v_1,...,v_l$, add vertex $v'$ and edge $(v,v')$ of length zero and replace edges $(v,v_1)$,$(v,v_2)$ with edges $(v',v_1)$,$(v',v_2)$ of the same lengths.

Further, the \emph{subtree rooted at $v$} is the subgraph induced by the descendants of $v$ and is denoted $T_v$.  If $u = p(v)$, the $v$-\emph{branch} at $u$ consists of the subtree rooted at $v$ together with the edge $(u,v)$. We define the \emph{length} of a subgraph $A\subseteq E$ to be $\ell(A) = \sum_{(u,v) \in A} \ell(u,v)$.  For vertices $u,v$, we use $d_T(u,v)$ to denote the shortest-path distance in $T$ between $u$ and $v$.


Our framework applies to vehicle routing problems that can be framed as a {\sc Min-Max Vehicle Load} problem, in which the objective is to minimize the maximum vehicle load, subject to a fleet budget.  Given a {\sc Min-Max Vehicle Load} problem, a trivial $n$-approximation can be used to obtain an upper bound $D_{high}$ for $OPT$.  An overarching algorithm  takes as input a load value $D\geq 0$ and provides the following guarantee: if there exists a solution with max load $D$, the algorithm will find a solution with max load at most $(1+\ep)D$.  A PTAS follows from using binary search between $\frac{D_{high}}{n}$ and $D_{high}$ for the smallest value $D_{low}$ such that the algorithm returns a solution of max load at most $(1+\ep)D_{low}$. This implies $D_{low} \leq OPT$.  For the rest of the paper, we assume $D$ is fixed.

\section{Framework Overview}\label{sec:framework}

Optimization problems on trees are often well suited for dynamic programming algorithms. In fact, the following dynamic programming strategy can solve {\sc Min-Max Vehicle Load} problems on trees exactly: at each vertex $v$, for each value $0\leq i \leq D$, \emph{guess} the number of solution route segments of load exactly $i$ in the subtree rooted at $v$.  Such an algorithm would be exponential in $D$.  Instead of considering every possible load value, route segment loads can be \emph{rounded} up to the nearest $\theta D$, for some value $\theta \in (0,1]$ that depends only on $\ep$, so that only $O(\theta^{-1})$ segment load values need to be considered.  In order to achieve a PTAS, we must show that this rounding does not incur too much error. Rounding the load of a route at \emph{every} vertex accumulates too much error, but if the number of times that any given route is rounded is at most $\ep/\theta$, then at most $\ep D$ error accumulates, as desired. 

One main insight underlying our algorithm is that a route only needs to incur rounding error when it branches.  The challenge in bounding the rounding error then becomes bounding the number of times a route branches. While a route in the optimal solution may have an arbitrary amount of branching, we show that we can greatly limit the scope of candidate solutions to those with a specific structure while only incurring an $\ep D$ error in the maximum load.  Rather than having to make decisions for covering every leaf in the tree (of which there may be arbitrarily many$-$each with arbitrarily small load), we partition the tree into \emph{clusters} and then address covering the clusters.  

By \emph{reassigning} small portions of routes within a cluster, we show that there exists a near-optimal solution in which all clients (leaves) within a given cluster are covered by only one or two vehicles.  These clusters are chosen to be small enough that the error incurred by the reassignment is small, but large enough that any given route covers clients in a bounded number of clusters.  This \emph{coarsens} the solutions considered by the algorithm, as vehicles must commit to covering larger fractions of load at a time. A dynamic program then finds the optimal such coarse solution using these simple building blocks within each cluster.

\subsection{Simplifying the Solution Structure}
Let $\hat{ep}$ and $\delta$ be problem-specific values that depend only on $\ep$. Let $\mathcal{H_T}$ denote the set of all subgraphs of $T$, and let $g:{\mathcal{H}}\rightarrow \mathbb{Z}^{\geq 0}$ be a problem-specific \emph{load function}.  We require $g$ to be monotonic and subadditive.  Intuitively, for all $H\in\mathcal{H_T}$, $g(H)$ is the load accumulated by a vehicle for covering $H$.

\subsubsection{Condensing the Input Tree}

The first step in the framework is to {\sc condense} all small branches into leaf edges. Specifically, let $\mathcal{B}$ be the set of all maximal branches of load at most $\delta D$.  That is, for every $v$-branch $b \in \mathcal{B}$, $g(b)  \leq \delta D$ and for $b$'s parent $p(v)$-branch, $b_p$, $g(b_p) >\delta D$.  For convenience, if $b_1 \in \mathcal{B}$ is a $v_1$ branch at $u$ and $b_2 \in \mathcal{B}$ is a \emph{sibling} $v_2$ branch at $u$ such that $g(b_1)+g(b_2) \leq \delta D$, we add a vertex $u'$ and an edge $(u,u')$ of length zero and replace $(u,v_i)$ with edge $(u',v_i)$ of length $\ell(u,v_i)$ for $i \in \{1,2\}$.  The $u'$ branch at $u$ then replaces the two branches $b_1$ and $b_2$ in $\mathcal{B}$. This ensures that any two branches in $\mathcal{B}$ with the same parent cannot be combined into a subtree of load $\leq \delta D$. 

Then, for every $b\in \mathcal{B}$, we \emph{condense} $b$ by replacing it with a leaf edge of length $\ell(b)$ and load $g(b)$.  All clients in $b$ are now assumed to be co-located at the leaf. Though it is easier to think of these condensed branches as leaf edges, the algorithm need not actually modify the input tree; condensing a branch is equivalent to requiring a single vehicle to cover the entire branch.  

\subsubsection{Clustering the Condensed Tree}

After condensing all small branches, we partition the condensed tree into clusters and define every leaf edge whose load is at least $\frac{\delta}{2} D$ to be a \emph{leaf cluster}.  The leaf-cluster-to-root paths define what we call the \emph{backbone} of $T$. By construction, every edge that is not on this backbone is either a leaf cluster (of load $\geq \frac{\delta}{2}D$) or a leaf edge (of load $< \frac{\delta}{2} D$).  That is, every vertex is at most one edge away from the backbone (see Figure~\ref{fig:woolly}). 

We can think of the condensed tree as a binary tree whose root is the depot, whose leaves are the leaf clusters, and whose internal vertices are the branching points of the backbone.  Each edge of this binary tree corresponds to a maximal path of the backbone between these vertices, together with the small leaf edges off of this path (see Figure~\ref{fig:woolly}).  To avoid confusion with tree edges, we call these path and leaf subgraphs \emph{woolly edges}.  A \emph{woolly subedge} of a woolly edge consists of a subpath of the backbone and all incident leaf edges.

\begin{figure}[!h]
\captionsetup[subfigure]{justification=centering}
\centering
  \subfloat[Woolly edges\label{fig:woolly}]
    {\includegraphics[width=0.3\textwidth]{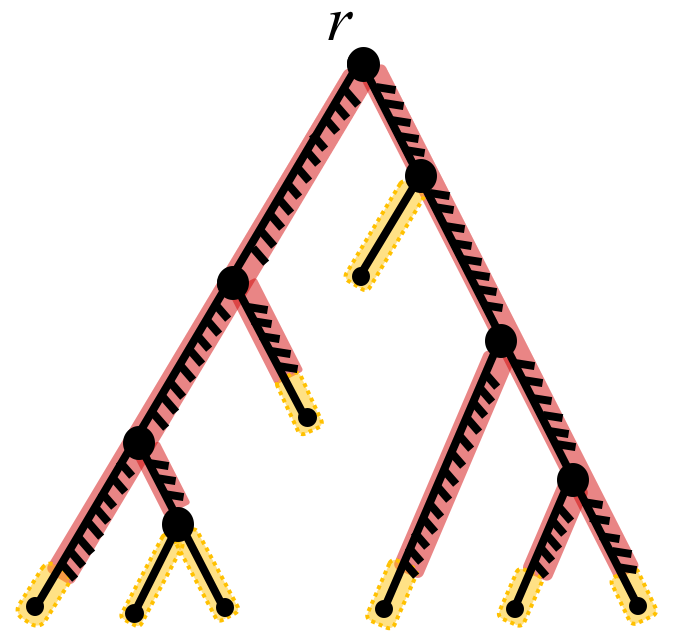}}
        \quad\vline\quad
  \subfloat[Clusters\label{fig:clusters}]
    {\includegraphics[width=0.3\textwidth]{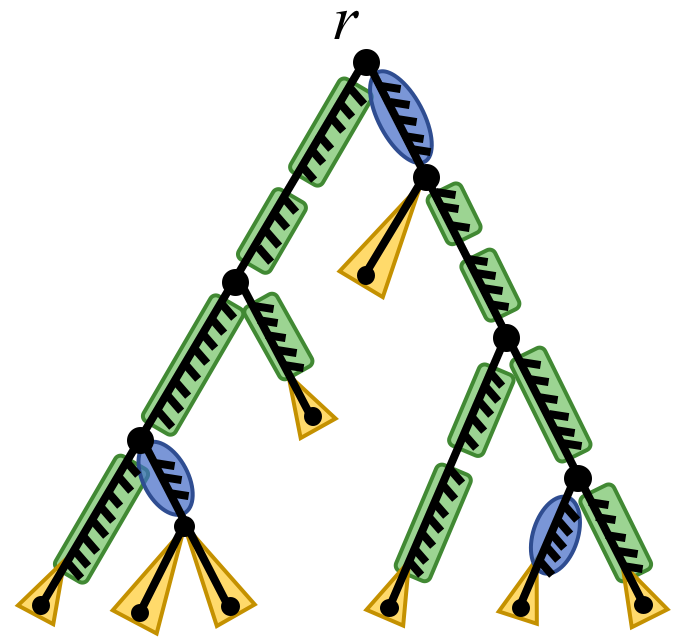}}
        \quad\vline\quad
  \subfloat[$T^*$\label{fig:t_star}]
    {\includegraphics[width=0.25\textwidth]{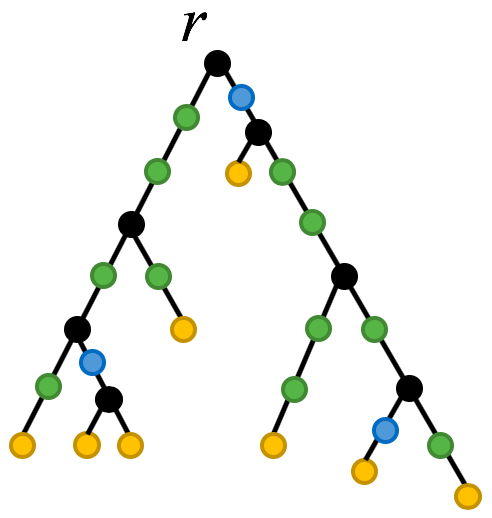}}
\caption{\label{fig:clustering} (a) Leaf clusters in yellow and woolly edges in red; (b) The tree partitioned into leaf clusters (yellow triangles), small clusters (blue ovals), and edge clusters (green rectangles); (c) The corresponding $T^*$ for clustering from (b).}
\end{figure}

A woolly edge $e$ whose load $g(e)$ is less than $\frac{\hat{\ep}\delta}{2}D$ is called a \emph{small cluster}.  The remaining woolly edges have load at least $\frac{\hat{\ep}\delta}{2}D$.  We partition each such woolly edge into one or more woolly subedges, which we call \emph{edge clusters}, each with load in $[\frac{\hat{\ep}\delta}{2}D,\frac{\delta}{2}D]$.   Backbone edges do not contain clients and can be subdivided as needed to ensure enough granularity in the tree edge lengths so that such a partition is always possible (see Figure~\ref{fig:clusters}).

For convenience, we label the components of edge clusters.  Let $\mathcal{C}$ be the set of edge clusters.  For any edge cluster $C \in \mathcal{C}$, let $P_C$ denote the backbone path in $C$ and let $L_C$ denote the leaf edges in $C$.  We order the backbone edges along $P_C$ as $p_{C,1}, p_{C,2},..., p_{C,m}$ in increasing distance from the depot and similarly label the leaf edges $e_{C,1}, e_{C,2}, ..., e_{C,m-1}$ such that $e_{C,i}$ is the leaf incident to $p_{C,i}$ and $p_{C,i+1}$ for all $1 \leq i < m$ (see Figure~\ref{fig:tour_types}).  If no such incident leaf exists for some $i$, we can add a leaf of length zero.  Likewise $P_C$ can be \emph{padded} with edges of length zero to ensure that each edge cluster `starts' and `ends' with a backbone edge.

\subsubsection{Solution Structure}

Consider the intersection of a solution with an edge cluster $C$. There are three different \emph{types} of routes that visit $C$ (see Figure~\ref{fig:tour_types}).  A $C$-\emph{passing} route traverses $C$ without covering any clients, and thus includes all of $P_C$ but no leaf edges in $L_C$. A $C$-\emph{collecting} route traverses \emph{and} covers clients in $C$, and thus includes all of $P_C$ and some edges in $L_C$. Last, a $C$-\emph{ending} route covers clients in, but does not traverse $C$, and thus includes backbone edges $p_{C,1}, p_{C,2},..., p_{C,i}$ for some $i<m$ and some leaves in $L_C$, but does not include all of $P_C$. Note that any $C$-\emph{ending} route can be assumed to cover some leaves in $L_C$ as removing any such redundancy would only improve a solution.

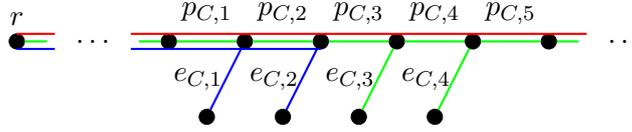
\begin{figure}
\begin{center}
\begin{tikzpicture}[every node/.style={circle,draw=black,minimum width=2mm,inner sep=0mm}]
\node[fill=black] (r) at (0,0) {};
\node[draw=none] (rlabel) at (0,0.3) {$r$};
\node[draw=none] (c1) at (0.5,0) {};
\node[draw=none] (c2) at (1.5,0) {};
\node[draw=none] at (1,0) {$\ldots$};
\node[fill=black] (v1) at (2,0) {};
\node[fill=black] (v2) at (3,0) {};
\node[fill=black] (v3) at (4,0) {};
\node[fill=black] (v4) at (5,0) {};
\node[fill=black] (v5) at (6,0) {};
\node[fill=black] (v6) at (7,0) {};
\node[draw=none] (c3) at (7.5,0) {};
\node[draw=none] at (8,0) {$\ldots$};
\node[fill=black] (l1) at (2.5,-1) {};
\node[fill=black] (l2) at (3.5,-1) {};
\node[fill=black] (l3) at (4.5,-1) {};
\node[fill=black] (l4) at (5.5,-1) {};
\draw[green,thick] (r) --  (c1);
\draw[red,thick] ($(r)+(0,0.1)$) --  ($(c1)+(0,0.1)$);
\draw[blue,thick] ($(r)+(0,-0.1)$) --  ($(c1)+(0,-0.1)$);
\draw[green,thick] (c2)--(v1);
\draw[red,thick] ($(c2)+(0,0.1)$) --  ($(v1)+(0,0.1)$);
\draw[blue,thick] ($(c2)+(0,-0.1)$) --  ($(v1)+(0,-0.1)$);
\draw[green,thick] (v1) -- node[black,draw=none,anchor=south] {$p_{C,1}$} (v2);
\draw[red,thick] ($(v1)+(0,0.1)$) -- ($(v2)+(0,0.1)$);
\draw[blue,thick] ($(v1)+(0,-0.1)$) -- ($(v2)+(0,-0.1)$);
\draw[green,thick] (v2) -- node[black,draw=none,anchor=south] {$p_{C,2}$} (v3);
\draw[red,thick] ($(v2)+(0,0.1)$) -- ($(v3)+(0,0.1)$);
\draw[blue,thick] ($(v2)+(0,-0.1)$) -- ($(v3)+(0,-0.1)$);
\draw[green,thick] (v3) -- node[black,draw=none,anchor=south] {$p_{C,3}$} (v4);
\draw[red,thick] ($(v3)+(0,0.1)$) -- ($(v4)+(0,0.1)$);
\draw[green,thick] (v4) -- node[black,draw=none,anchor=south] {$p_{C,4}$} (v5);
\draw[red,thick] ($(v4)+(0,0.1)$) -- ($(v5)+(0,0.1)$);
\draw[green,thick] (v5) -- node[black,draw=none,anchor=south] {$p_{C,5}$} (v6);
\draw[red,thick] ($(v5)+(0,0.1)$) -- ($(v6)+(0,0.1)$);
\draw[green,thick] (v6) --  (c3);
\draw[red,thick] ($(v6)+(0,0.1)$) --  ($(c3)+(0,0.1)$);
\draw[blue,thick] (v2) -- node[black,draw=none,anchor=east] {$e_{C,1}$} (l1);
\draw[blue,thick] (v3) -- node[black,draw=none,anchor=east] {$e_{C,2}$} (l2);
\draw[green,thick] (v4) -- node[black,draw=none,anchor=east] {$e_{C,3}$} (l3);
\draw[green,thick] (v5) -- node[black,draw=none,anchor=east] {$e_{C,4}$} (l4);
\end{tikzpicture}
\end{center}

\caption{Three types of route within an edge cluster $C$; the red tour is a $C$-passing route, the green tour is a $C$-collecting route, and the blue tour is a $C$-ending route.}
\label{fig:tour_types}
\end{figure}

We say that a cluster $C$ has \emph{single coverage} if a single vehicle covers \emph{all} clients in $C$.  We say that an edge cluster $C$ has \emph{split coverage} if there is one $C$-\emph{ending} route that covers leaf edges $e_{C,1}, e_{C,2}, ..., e_{C,i}$ for some $i<m-1$ and one $C$-\emph{collecting} route that covers leaf edges $e_{C,i+1}, e_{C,i+2}, ..., e_{C,m-1}$ (see Figure~\ref{fig:tour_types}). 

Finally, we say that a feasible solution has a \emph{simple structure} if:
\begin{itemize}
\item Leaf clusters and small clusters have single coverage,
\item Edge clusters have single or split coverage, and
\item Each vehicle covers clients in $O(\frac{1}{\hat{\ep}^2\delta})$ clusters
\end{itemize}

The framework relies on the proof of a \emph{structure theorem} stating that there exists a near-optimal solution (i.e. a feasible solution with maximum load at most $(1+\ep) D$)  with simple structure.  This proves that it is safe to reduce the set of potential solutions to those with simple structure.

\subsection{Dynamic Program}

After proving a structure theorem, the framework uses a dynamic programming algorithm (DP) to actually find a near-optimal solution with simple structure.  We define the \emph{cluster tree} $T^*$ to be the tree that results from contracting each cluster of $T$ to a single vertex. That is, the cluster tree has a vertex for each cluster and each branching point of the backbone (See Figure~\ref{fig:t_star}).  The DP traverses $T^*$ starting at the leaves and moving rootward, and enumerates the possible route structures within each cluster.  Namely, the DP considers all ways edge cluster coverage can be split and how routes are merged at branching points.

At each vertex in this tree the algorithm stores a set of \emph{configurations}.  A configuration is interpreted as a set of routes in $T$ that cover all clusters in the subtree of $T^*$ rooted at $v$.  Let $\theta\in(0,1]$ be a problem-specific value that depends only on $\ep$. A configuration at a vertex $v$ specifies, for each multiple $i$ of $\theta D$ between 0 and $(1+\ep)D$ the number of routes whose rounded load is $i$ at the time they reach $v$. Because $\theta$ depends only on $\ep$, the number of configurations and runtime of the DP is polynomially bounded.   After traversing the entire cluster tree, the solution is found at the root.  If there exists a configuration at the root such that all of the rounded route loads are at most $(1+\ep)D$, the algorithm returns this solution.

To ensure that the DP actually finds a near-optimal solution, we must bound the number of times that a given route is rounded to $\ep/\theta$, which gives a rounding error of at most $\ep D$.  In particular, we design the DP so that the number of times that any one route is rounded is proportional to the number of clusters that it covers clients in.  Then, using the structure theorem, there exists a near-optimal solution that covers clients in $O(\frac{1}{\hat{\ep}^2\delta})$ clusters and gets rounded by the DP $O(\frac{1}{\hat{\ep}^2\delta})$ many times.  Finally, $\theta$ is set to $c_{\theta}\ep\hat{\ep}^2\delta$ for some constant $c_{\theta}$.

For loads involving distance, $C$-passing routes pose a particular challenge for bounding rounding error.  These routes may accumulate load while passing through clusters without covering any clients, yet the DP cannot afford to update the load at every such cluster.  Instead, the DP \emph{projects} routes to predetermined destinations up the tree, so that they accumulate rounding error only once while passing many clusters. The configuration then stores the (rounded) loads of the \emph{projected} routes, and the DP need not update these load values for clusters passed through along the projection.

\subsection{Reassignment Lemma}\label{subsec:assignment_lemma}
We now present a lemma that will serve as a general-purpose tool for our framework. This tool is used to \emph{reassign} small route segments.  That is, if some subgraph $H$ is covered by several small route segments from distinct vehicles $h_1,h_2,...,h_m$, then for some $1\leq i \leq m$, the entire subgraph $H$ is assigned to be covered by $h_i$. This \emph{increases} load on $h_i$ so as to cover all of $H$, and \emph{decreases} load on $h_j$ for all $j\neq i$ which are no longer required to cover $H$ (see Figure~\ref{fig:condense}).  We show that this assignment process can be performed simultaneously for many such subgraphs such that the net load increase of any one route is small.

Let $G=(A,B,E)$ be an edge-weighted bipartite graph where $A$ is a set of \emph{facilities}, $B$ is a set of \emph{clients}, and $w(a,b) \geq 0$ is the weight of edge $(a,b) \in E$. For any vertex $v$, we use $N(v)$ to denote the \emph{neighborhood} of $v$, namely the set of vertices $u$ such that there is an edge  $(u,v)\in E$.  Each facility $a \in A$ has capacity $q(a) = \sum_{b\in N(a)}w(a,b)$ and each client $b\in B$ has weight $w(b) \leq \sum_{a\in N(b)}w(a,b)$.  A feasible \emph{assignment} is a function $f:B\rightarrow A$, such that each client $b$ is assigned to an adjacent facility $f(b)\in N(b)$. We can think of the weights $w(a,b)$ representing fractional assignment costs while weight $w(b)$ corresponds to a ``discounted'' cost of wholly serving client $b$. Ideally, the total weight of clients assigned to any facility $a$ would not exceed the capacity $q(a)$; however, this is not always possible.  We define the \emph{overload} $h_f(a)$ of a facility $a$ to be $w(f^{-1}(a))-q(a) = \sum_{b|f(b)=a}w(b)-\sum_{b\in N(a)}w(a,b)$ and the \emph{overload} $h_f$ of an assignment to be $\max_{a\in A}{h_f(a)}$. The {\sc Bipartite Weight-Capacitated Assignment} problem is to find an assignment with minimum overload.  

\begin{lemma}\label{lem:assign}
Given an instance of the {\sc Bipartite Weight-Capacitated Assignment} problem, an assignment with overload at most $\max_{b\in B}w(b)$ can be found efficiently.
\end{lemma}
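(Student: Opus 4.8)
The plan is to first exhibit a \emph{fractional} assignment that respects every capacity exactly, and then round it to an integral assignment while charging each facility at most one extra client. Define the polytope $P$ of fractional assignments $x=(x_{ab})_{(a,b)\in E}$ satisfying $\sum_{a\in N(b)}x_{ab}=1$ for every client $b$, the capacity constraint $\sum_{b\in N(a)}w(b)\,x_{ab}\le q(a)$ for every facility $a$, and $x\ge 0$. I would first observe that $P$ is nonempty: setting $x_{ab}=w(a,b)/\sum_{a'\in N(b)}w(a',b)$ satisfies the assignment equalities, and since $w(b)\le\sum_{a'\in N(b)}w(a',b)$ we get $w(b)\,x_{ab}\le w(a,b)$, so the load on $a$ is at most $\sum_{b\in N(a)}w(a,b)=q(a)$. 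Thus a capacity-feasible fractional assignment always exists, and the whole difficulty is integrality.

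The second step is to pass to an extreme point $x^\ast$ of $P$ and exploit its sparse support. Let $B_f$ be the clients with no coordinate equal to $1$ (the \emph{fractionally split} clients), and let $H$ be the bipartite graph whose edges are the strictly fractional entries of $x^\ast$ incident to $B_f$. By the standard extreme-point counting argument (the number of positive variables at a vertex is at most the number of tight constraints, here the $|B_f|$ assignment equalities together with the tight capacity constraints), $H$ has in each component at most as many edges as vertices, i.e. $H$ is a \emph{pseudoforest}. Moreover every $b\in B_f$ must have degree at least $2$ in $H$: a single fractional edge at $b$ would be forced to value $0$ or $1$ by the equality $\sum_{a}x^\ast_{ab}=1$, a contradiction.

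The heart of the argument is then a matching step. Using that $H$ is a pseudoforest together with the degree-$\ge 2$ property, I would verify Hall's condition for $B_f$: for any $S\subseteq B_f$, the edges incident to $S$ number at least $2|S|$ and all lie inside $S\cup N_H(S)$, while a pseudoforest on that vertex set has at most $|S|+|N_H(S)|$ edges, forcing $|N_H(S)|\ge|S|$. Hence there is a matching saturating $B_f$ that sends each split client to a \emph{distinct} facility. I would then define $f$ by assigning each integral client to its unique value-$1$ facility and each split client to its matched facility. For a fixed facility $a$, the integral clients contribute exactly $\sum_{b\notin B_f}w(b)\,x^\ast_{ab}\le q(a)-\sum_{b\in B_f}w(b)\,x^\ast_{ab}\le q(a)$, and at most one split client is added, of weight at most $\max_{b\in B}w(b)$; hence $h_f(a)\le\max_{b\in B}w(b)$, as required. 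All steps (solving the LP, extracting an extreme point, and computing a bipartite matching) run in polynomial time.

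I expect the main obstacle to be establishing the pseudoforest structure of the extreme point cleanly and then converting it, via Hall's theorem, into a matching in which no facility absorbs more than one split client; this is exactly where the bound $\max_{b}w(b)$ (rather than a sum of client weights) comes from. An alternative, LP-free route would iteratively cancel cycles in the support of the explicit fractional solution, but maintaining capacity-feasibility under cycle cancellation is delicate, so routing through an LP extreme point is the more robust choice.
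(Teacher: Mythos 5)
Your proof is correct, but it takes a genuinely different route from the paper. The paper's argument is purely combinatorial: it starts from an arbitrary assignment, stratifies facilities and clients into levels according to reachability from the overloaded facilities, and repeatedly reassigns a single client to an adjacent underloaded facility one level deeper; the key step is a counting argument (summing $w(b)\le\sum_{a\in N(b)}w(a,b)$ over all clients at finite level) showing that an underloaded facility always exists among the reachable levels, and a potential function bounds the number of improvements by $|B|^2$. You instead route through the LP: exhibit the explicit capacity-feasible fractional solution $x_{ab}=w(a,b)/\sum_{a'}w(a',b)$, pass to an extreme point, use the tight-constraint count to get the pseudoforest structure of the fractional support, and then match each split client to a distinct adjacent facility via Hall's condition --- this is essentially the Shmoys--Tardos rounding for generalized assignment, and your verification of Hall's condition from the degree-$\ge 2$ and pseudoforest properties is sound (the one step worth spelling out is the per-component localization of the edge-counting bound, since Hall's condition needs the subgraph property, not just a global edge count). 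Each approach has its merits: the paper's is elementary, self-contained, and avoids an LP solver, giving a fast local-search algorithm; yours plugs into well-known machinery, makes the source of the $\max_b w(b)$ additive loss transparent (each facility absorbs at most one whole client beyond its fractional load), and would extend more readily if one also wanted to optimize a secondary objective subject to the capacities.
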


\begin{proof}
Let $w_{max} = \max_{b\in B}w(b)$.  Consider an initial assignment $f$ by arbitrarily assigning each client to an adjacent facility.  Let $A_{0} = \{a\in A|h_f(a)>w_{max}\}$ be the set of facilities whose capacities are exceeded by more than $w_{max}$.  The lemma statement is satisfied if and only if $|A_{0}| = 0$. 

Let $B_0 = f^{-1}(A_{0})$ be the set of clients that are assigned to facilities in $A_{0}$.  We inductively define $A_i$ for $i \geq 1$ to be $N(B_{i-1})\setminus \bigcup_{j<i}A_j$ and $B_i= f^{-1}(A_i)$ to be the set of clients that are assigned to facilities in $A_i$ (see Figure~\ref{fig:assignment_lemma}). We say that a client $b$ (resp. facility $a$) has \emph{level} $i$ if $b\in B_i$ (resp. $a\in A_i$), and we say that client $b$ has infinite level if $b$ does not appear in any $B_i$.  By construction, each client appears in some $B_i$ for at most one value $i$, so the level of a client is either infinite or at most $|B|$ (see Figure~\ref{fig:assignment_lemma}).

\begin{figure}
\begin{center}
\begin{tikzpicture}[scale=0.8,every node/.style={circle,draw=black,inner sep=0.5mm}]
\node[draw=none] at (-2,0) {$A$};
\node[draw=none] at (-2,-2) {$B$};
\node[] (a1) at (0,0) {$a_1$};
\node[] (a2) at (1,0) {$a_2$};
\node[] (a3) at (2,0) {$a_3$};
\node[] (a4) at (3,0) {$a_4$};
\node[] (b1) at (-1,-2) {$b_1$};
\node[] (b2) at (0,-2) {$b_2$};
\node[] (b3) at (1,-2) {$b_3$};
\node[] (b4) at (2,-2) {$b_4$};
\node[] (b5) at (3,-2) {$b_5$};
\node[] (b6) at (4,-2) {$b_6$};
\draw[thick] (a1)--(b1);
\draw[thick] (a1)--(b2);
\draw[thick] (a2)--(b3);
\draw[thick] (a2)--(b4);
\draw[thick] (a3)--(b5);
\draw[thick] (a4)--(b6);
\draw[thick,dashed] (a2)--(b2);
\draw[thick,dashed] (a3)--(b1);
\draw[thick,dashed] (a3)--(b3);
\draw[thick,dashed] (a3)--(b4);
\draw[thick,dashed] (a3)--(b6);
\draw[thick,red] (-0.5,0.5) rectangle (1.45,-0.5);
\node[red,draw=none] at (0.5,0.65) {$A_0$};
\draw[thick,red] (-1.5,-1.5) rectangle (2.45,-2.5);
\node[red,draw=none] at (0.5,-2.75) {$B_0$};
\draw[thick,blue] (1.55,0.5) rectangle (2.5,-0.5);
\node[blue,draw=none] at (2,0.65) {$A_1$};
\draw[thick,blue] (2.55,-1.5) rectangle (3.5,-2.5);
\node[blue,draw=none] at (3,-2.75) {$B_1$};
\end{tikzpicture}
\end{center}
\caption{An example of the inductive subsets $A_i$ and $B_i$. A solid edge $(a,b)$ indicates that $f(b) = a$, and a dashed edge indicates that $(a,b) \in E$ but $f(b) \neq a$. Note that $b_6$'s level is infinite.}
\label{fig:assignment_lemma}
\end{figure}
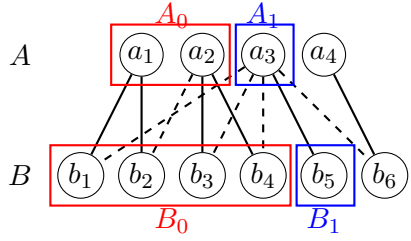

It suffices to show that if $|A_{0}| > 0$ then there is some client $b$, with some finite level, whose level can be increased without decreasing the level of any other client.  After at most $|B|^2$ such improvements, $B_0$ must be empty, so $|A_{0}|$ must also be empty, proving the claim.

Suppose that there is some $i$ and some facility $a \in A_i$ such that $h_f(a)\leq q(a)$.  We say such a facility is \emph{underloaded}, and therefore $i>0$.  By construction, there is some $b \in B_{i-1}$ adjacent to $a$ and some $a' \in A_{i-1}$ such that $f(b) = a'$.  We reassign $b$ to $a$ by setting $f(b) = a$.  Note that since this adds $w(b) \leq w_{max}$ to the load of $a$, the resulting overload of $a$ is at most $w_{max}$, so the level of $a$ does not decrease.  Further, $b$ now has level at least $i$ and the level of every other client is either unaffected or increased.


We now show that such an underloaded facility always exists. Let $j$ be the largest value such that $A_j$ is non-empty.  If $B_j$ is empty, then all facilities in $A_j$ are underloaded.  Otherwise $B_j$ is non-empty and $N(\bigcup_{0\leq i \leq j} B_i) \subseteq \bigcup_{0\leq i \leq j} A_i$, so $\sum_{b\in \cup_iB_i}w(b) \leq \sum_{b\in \cup_iB_i}\sum_{a\in N(b)}w(a,b) \leq \sum_{a\in \cup_iA_i}q(a)$. Since no other clients are assigned to these facilities, at least one facility in $\bigcup_{0\leq i \leq j} A_i$ must be underloaded.

\end{proof}
\section{Customizing the Framework: {\sc Minimum Makespan Vehicle Routing}}\label{sec:makespan}

In this section, we demonstrate how to apply the general framework to a specific problem, {\sc Minimum Makespan Vehicle Routing}.  In particular, we use the framework to achieve the following:

\begin{theorem}\label{thm:makespan_ptas}
For every $\ep > 0$, there is a polynomial-time algorithm that, given an instance of {\sc Minimum Makespan Vehicle Routing} on a tree, finds a solution whose makespan is at most $1+\ep$ times optimum.
\end{theorem}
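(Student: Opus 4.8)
The plan is to instantiate the general framework of Section~\ref{sec:framework}, so that the proof reduces to three tasks: (i)~fixing the load function and the parameters $\hat{\ep},\delta,\theta$ and verifying the framework's hypotheses; (ii)~proving the structure theorem for this instantiation; and (iii)~describing the dynamic program that recovers a near-optimal simple-structure solution. For {\sc Minimum Makespan Vehicle Routing} the natural load function is the tour length $g(H)=2\ell(H)$, since on a tree a round trip covering a subgraph $H$ and returning to $r$ traverses each edge of $H$ exactly twice; thus the load of a tour is precisely its contribution to the makespan, and no rescaling is needed. Monotonicity is immediate ($\ell$ is nondecreasing under edge addition), and subadditivity follows from $\ell(H_1\cup H_2)\le \ell(H_1)+\ell(H_2)$. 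The fleet budget is simply the number $k$ of tours, a counting constraint the DP tracks directly. Following the framework, I would set $\delta=\Theta(\ep)$, $\hat{\ep}=\Theta(\ep)$, and $\theta=c_\theta\,\ep\hat{\ep}^2\delta$, fixing the constants once the error analysis demands it.

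The heart of the argument is the structure theorem: starting from an optimal solution of makespan $D$ (so each tour has load at most $D$), I would produce a feasible simple-structure solution of maximum load at most $(1+\ep)D$. Condensing every maximal branch of load at most $\delta D$ is justified by invoking Lemma~\ref{lem:assign}: form a bipartite graph whose facilities are the tours and whose clients are the small branches, with $w(a,b)$ the load tour $a$ currently spends in branch $b$ and $w(b)=g(b)\le\delta D$ the cost of covering $b$ by a single tour; the lemma reassigns each branch to one tour with overload at most $\max_b w(b)\le\delta D$. Organizing the reassignments in disjoint groups along the backbone (or by a level-based accounting) so that any single tour is charged only a constant number of times keeps the per-tour load increase at $O(\ep D)$. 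I would apply the same idea cluster by cluster to enforce single coverage of leaf and small clusters and single-or-split coverage of edge clusters, again bounding the per-tour increase by a constant number of cluster loads, each at most $\tfrac{\delta}{2}D$. Finally, to bound the number of clusters a tour collects in, note that each leaf or edge cluster carries load at least $\tfrac{\hat{\ep}\delta}{2}D$, so a tour of load at most $(1+\ep)D$ collects in $O(1/(\hat{\ep}\delta))$ of them; a further consolidation of the small clusters hanging off the backbone segment a tour traverses yields the stated $O(1/(\hat{\ep}^2\delta))$ bound.

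With the structure theorem in hand, the DP traverses the cluster tree $T^*$ from the leaves rootward, maintaining at each vertex a set of configurations recording, for each multiple $i$ of $\theta D$ in $[0,(1+\ep)D]$, the number of tours whose rounded load is $i$ on reaching that vertex. At each cluster the DP enumerates the constantly-many coverage options (single versus split, and for split the index at which an edge cluster is divided), and at each backbone branching point it enumerates how the incoming tours of the two child subtrees are merged; the total tour count is tracked so the fleet budget $k$ can be enforced. Because makespan is a distance load, $C$-passing tours would otherwise incur rounding error at every cluster they traverse; to prevent this I would use the route-projection technique, projecting a passing tour to a predetermined rootward destination and charging its traversal load once. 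Each tour is then rounded a number of times proportional only to the clusters in which it collects, namely $O(1/(\hat{\ep}^2\delta))$ by the structure theorem, so with $\theta=c_\theta\,\ep\hat{\ep}^2\delta$ the accumulated rounding error is at most $\ep D$. The configuration returned at the root thus certifies a solution of makespan at most $(1+\ep)D$; combined with the binary search on $D$ from Section~\ref{sec:preliminaries} this is within $(1+\ep)$ of optimum. Since $\theta$ depends only on $\ep$, the number of configurations, and hence the running time, is polynomial.

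The main obstacle I anticipate is the error bookkeeping in the structure theorem. The reassignments for condensing, for single/split coverage, and for bounding the cluster count must be arranged so that their load increases do not accumulate across all clusters (which would be unbounded) but instead charge any single tour only $O(1)$ times. Making the overload guarantee of Lemma~\ref{lem:assign} compose in this global, telescoping fashion — while simultaneously choosing $\delta$ small enough that each reassignment is cheap yet large enough that each tour meets only few clusters — is the delicate balance on which the whole scheme rests.
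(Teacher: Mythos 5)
Your overall plan is the paper's plan: the same load function $g(H)=2\ell(H)$, the same use of Lemma~\ref{lem:assign} to condense small branches and to enforce single/split cluster coverage, and the same cluster-tree DP with loads rounded to multiples of $\theta D$ and passing tours projected rootward. However, there is a genuine gap in your argument bounding the number of clusters in which a single tour covers clients, which is the load-bearing step of the structure theorem. You argue that ``each leaf or edge cluster carries load at least $\tfrac{\hat{\ep}\delta}{2}D$, so a tour of load at most $(1+\ep)D$ collects in $O(1/(\hat{\ep}\delta))$ of them.'' That inference is valid only under single coverage. Under split coverage an edge cluster $C$ is shared by a $C$-ending and a $C$-collecting tour, and nothing established so far prevents one of the two from covering an arbitrarily small portion of $C$'s load; a single tour could therefore be the ``light'' partner in arbitrarily many edge clusters while accumulating negligible load in each, and your counting bound fails. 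Since the DP rounds a tour once per cluster in which it covers clients, an unbounded cluster count means unbounded rounding error, so the scheme collapses exactly here.

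The paper closes this hole with an extra reassignment step inside the proof of Theorem~\ref{thm:structure_makespan}: before counting, it forces every $C$-ending or $C$-collecting tour segment to have length at least $\tfrac{\hat{\ep}^3}{2}D$ by handing any smaller segment's clients to the partner segment, which necessarily has length at least $\hat{\ep}^2(1-\tfrac{\hat{\ep}}{2})D$ because the two segments together cover $2\ell(C)\ge\hat{\ep}^2D$. Crucially this charge is \emph{multiplicative} (a factor $1+\hat{\ep}$ on each tour over all its clusters, rather than an additive term per cluster), which is what lets it compose across the many edge clusters a tour traverses. Only after this step does the counting argument yield the $O(1/\hat{\ep}^3)$ bound on edge clusters per tour that the DP's rounding analysis needs. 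You should add this step (or an equivalent minimum-segment-length argument); with it in place, the rest of your outline, including the parameter choices, the constant number of applications of Lemma~\ref{lem:assign}, and the DP with its merge and projection logic, matches the paper's proof.
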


Recall that the problem is to find $k$ \emph{tours} that serve all clients in $T$ such that the maximum \emph{length} of any tour is minimized. The vehicle routes are tours, and the vehicle load is tour length, so the load $g(H)$ of subgraph $H$ is twice the length of edges in the subgraph.  The {\sc condense} operation is then applied to the input tree, with $\delta= \hat{\ep} = \ep/c$ for some constant $c$ we will define later.  Leaf clusters therefore correspond to branches of length at least $\frac{\hat{\ep}}{4}D$ (load at least $\frac{\hat{\ep}}{2}D$), small clusters have total length less than $\frac{\ep\hat{\ep}}{4}D$, and edge clusters have total length in $[\frac{\ep\hat{\ep}}{4}D,\frac{\hat{\ep}}{4}D]$.  As described in Section~\ref{sec:framework}, the two steps in applying the framework are proving a structure theorem and designing a dynamic program.

\subsection{{\sc Minimum Makespan Vehicle Routing} Structure Theorem}

We prove the following for {\sc Minimum Makespan Vehicle Routing}.

\begin{theorem}\label{thm:structure_makespan}
If there exists a solution with makespan $D$, then there exists a solution with makespan at most $1+O(\hat{\ep})D$ that has simple structure.
\end{theorem}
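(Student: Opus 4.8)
The plan is to begin with an arbitrary feasible solution of makespan $D$ and to reshape it in a constant number of phases, each preserving feasibility and increasing the makespan by at most $O(\hat{\ep})D$, until all three requirements of simple structure hold. Since $g(H)=2\ell(H)$, a tour covering subtree $H$ has length $g(H)$, so ``makespan at most $X$'' and ``every vehicle has load at most $X$'' coincide, and the total error budget is a constant multiple of $\hat{\ep}D$. The workhorse in every phase is the Reassignment Lemma (Lemma~\ref{lem:assign}), which I would invoke with facilities $A$ the set of vehicles, clients $B$ the set of subgraphs to be consolidated (branches in $\mathcal B$, or clusters), edge weights $w(a,b)$ the load vehicle $a$ currently spends inside $b$, and client weights $w(b)=g(b)$ the load for one vehicle to cover $b$ by itself. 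Monotonicity and subadditivity of $g$ give $w(b)\le\sum_{a\in N(b)}w(a,b)$, so the hypotheses hold; the lemma then reassigns each $b$ wholly to a single incident facility while raising any vehicle's load by at most $\max_b w(b)$. Because an edge $(a,b)$ exists only when $a$ already reaches $b$, each reassignment keeps every tour a connected walk rooted at $r$ and needs no new backbone traversal.

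First I would justify the {\sc condense} step and obtain single coverage of small branches. Every maximal branch $b\in\mathcal B$ has $g(b)\le\delta D$ but may be split among many vehicles in the starting solution; one application of Lemma~\ref{lem:assign} with $B=\mathcal B$ consolidates each branch onto one vehicle at an additive cost of at most $\max_b g(b)\le\delta D=O(\hat{\ep})D$ per tour. This is exactly the single-coverage condition encoded by condensing, so afterward I may work on the condensed, clustered tree. I would then secure the first two bullets of simple structure the same way: a single call of the lemma with $B$ the leaf clusters and small clusters (each of load at most $\delta D$) makes all of them single-covered; and for each edge cluster $C$, using the linear order of $e_{C,1},\dots,e_{C,m-1}$ along $P_C$, I would reassign leaf edges so that $C$ is covered by one $C$-ending route on a prefix of the leaves and one $C$-collecting route on the complementary suffix, i.e.\ split (or single) coverage. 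Each of these reassignments has client weights bounded by the cluster load $\tfrac{\delta}{2}D=O(\hat{\ep})D$, so the makespan again grows by only $O(\hat{\ep})D$.

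The crux, which I expect to be the main obstacle, is the third bullet: forcing every vehicle to cover clients in only $O\!\left(\tfrac{1}{\hat{\ep}^2\delta}\right)$ clusters. For the heavy clusters this is a direct charge against a tour's load of $(1+O(\hat{\ep}))D$: a single-covered leaf cluster costs $\ge\tfrac{\delta}{2}D$, so a vehicle touches $O(1/\delta)$ of them, and since the backbone pieces $P_C$ of distinct edge clusters are edge-disjoint and a collecting route traverses all of $P_C$ while a single-covered cluster also pays for all of $L_C$, each such edge cluster costs the vehicle $\Omega(\hat{\ep}\delta D)$, bounding their number by $O(1/(\hat{\ep}\delta))$. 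The genuine difficulty is the small clusters and the $C$-ending pieces, whose coverage cost can be arbitrarily small, so no per-vehicle load charge controls their count; this is exactly where the extra factor $1/\hat{\ep}$ in $\tfrac{1}{\hat{\ep}^2\delta}$ must be spent. I would resolve this globally: an averaging bound shows the total number of covered clusters is $O(k/(\hat{\ep}\delta))$, so only an $O(\hat{\ep})$ fraction of the $k$ vehicles can exceed the target count, and I would charge each cheaply covered small cluster to an adjacent heavy cluster or leaf cluster along the same root-to-leaf backbone path --- using the binary-tree structure of the backbone to bound how many small clusters are charged to each heavy one --- and then reassign the offending coverage onto nearby under-count vehicles. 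Checking that this final redistribution simultaneously stays within the $O(\hat{\ep})D$ load budget, keeps every tour connected, and does not undo the single/split coverage established earlier is the delicate part of the argument; summing the $O(\hat{\ep})D$ error over the constantly many phases yields the claimed makespan of $(1+O(\hat{\ep}))D$.
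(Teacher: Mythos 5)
Your first two phases track the paper closely: the paper also proves the theorem by a sequence of feasibility-preserving transformations, each justified by Lemma~\ref{lem:assign} with tours as facilities and branches/subclusters as clients (Lemma~\ref{lem:condense} for condensing, Lemma~\ref{lem:split} for single/split coverage of edge clusters, the latter in three stages: consolidate the $C$-ending segments, consolidate the $C$-collecting leaf coverage, then resolve the overlap into a prefix/suffix split). The divergence --- and the genuine gap --- is in the third bullet, which you correctly identify as the crux but do not actually resolve. Your averaging argument only shows that \emph{most} vehicles cover few clusters, whereas the structure theorem needs \emph{every} vehicle to do so; and the proposed fix, ``reassign the offending coverage onto nearby under-count vehicles,'' is exactly the step that needs a proof. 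There is no guarantee that an under-count vehicle visits (or can be cheaply extended to visit) the clusters you want to hand off, so neither connectivity nor the $O(\hat{\ep})D$ load budget is established, and the redistribution could re-break the single/split coverage you just installed. Your phase~2 choice actually creates this problem: assigning small clusters to an \emph{arbitrary} incident tour via Lemma~\ref{lem:assign} lets one tour absorb arbitrarily many near-zero-load small clusters, which is precisely the unbounded-count scenario.

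The paper avoids averaging entirely with two per-vehicle arguments. For small clusters (Lemma~\ref{lem:small}), it does \emph{not} use the Reassignment Lemma: it assigns each small cluster to a \emph{descendant leaf cluster} (the binary backbone structure gives at most two per leaf cluster) and makes the tour that singly covers that leaf cluster also cover the small cluster; feasibility is automatic because that tour must pass through the small cluster en route to the root, and the count of small clusters per vehicle is then at most twice the count of leaf clusters, which is already bounded by a load charge. For edge clusters, the paper first enforces that every $C$-ending or $C$-collecting segment has length at least $\frac{\hat{\ep}^3}{2}D$ by merging any smaller segment into the other segment of the same cluster; since the two segments of a split-covered cluster sum to at least $\hat{\ep}^2 D$, this costs only a multiplicative $(1+\hat{\ep})$ factor per tour, after which a direct charge against the tour's $(1+O(\hat{\ep}))D$ length bounds the number of edge clusters it collects in by $O(1/\hat{\ep}^3)$. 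You would need to import both of these devices (or supply a complete redistribution argument with connectivity and load bounds) for your proof to go through.
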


We prove the above by starting with some optimal solution of makespan at most $D$ and show that after a series of steps that transforms the solution into one with simple structure, the makespan is still near-optimal.  

To ensure that each step maintains solution feasibility, we introduce the following notion of independence.  Let $T'$ be a connected subgraph of $T$ containing the depot $r$, and let $X$ be a set of subgraphs of $T$. We say that $X$ is a \emph{tour-independent} set with respect to $T'$ if $T' \cup X'$ is connected for all $X' \subseteq X$. In particular, if $T'$ is the subgraph covered by a single tour then adding any subgraphs in $X'$ creates a new feasible tour.  

\begin{lemma}\label{lem:condense}
The {\sc condense} operation adds at most $\hat{\ep} D$ to the optimal makespan.
\end{lemma}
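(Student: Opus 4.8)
The plan is to obtain this lemma as a direct application of the Reassignment Lemma (Lemma~\ref{lem:assign}), viewing the tours of an optimal solution as the facilities and the small branches as the clients. Start from a solution of makespan $D$. Since $\delta = \hat{\ep}$, every branch $b \in \mathcal{B}$ satisfies $g(b) = 2\ell(b) \leq \delta D = \hat{\ep} D$. The {\sc condense} operation is exactly the requirement that each such branch be covered by a single vehicle, so what must be shown is that branch coverage can be reassigned to individual tours while increasing each tour's length by at most $\hat{\ep} D$. I would build the bipartite instance $G=(A,B,E)$ by taking $A$ to be the set of tours and $B = \mathcal{B}$, placing an edge $(a,b)$ whenever tour $a$ visits the $v$-branch $b$ (i.e.\ traverses its parent edge $(p(v),v)$), so that $N(b)$ is precisely the set of tours visiting $b$.

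For the weights, let $S_{a,b}$ be the set of edges of $b$ traversed by $a$, and set $w(a,b) = 2\ell(S_{a,b})$ and $w(b) = g(b) = 2\ell(b)$, the load of covering all of $b$ with one vehicle; the capacity is then $q(a) = \sum_{b \in N(a)} w(a,b)$. The hypotheses of Lemma~\ref{lem:assign} are then routine to verify: $g$ is monotone and subadditive by assumption, and for the client-weight condition every edge of $b$ is covered by at least one tour, so $\bigcup_{a \in N(b)} S_{a,b}$ is all of $b$ and $\sum_{a \in N(b)} w(a,b) = 2\sum_{a} \ell(S_{a,b}) \geq 2\ell(b) = w(b)$. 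Applying the lemma produces an assignment $f : B \to A$ with overload $h_f \leq \max_{b \in B} w(b) \leq \hat{\ep} D$.

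From $f$ I would read off the new solution: each tour $a$ retains its backbone edges and now covers exactly the branches in $f^{-1}(a)$, traversing each fully (twice per edge). Feasibility follows from the tour-independence notion introduced just above the lemma: since $f(b) \in N(b)$, the tour $f(b)$ already reaches the attachment vertex of $b$, so appending $b$ keeps it connected, and a tour that no longer serves a branch can drop that excursion and remain a valid tour. For the makespan bound, the backbone contribution of each $a$ is unchanged; its new in-branch load is $w(f^{-1}(a))$, while because any closed walk in a tree traverses each edge an even (hence at least twice) number of times, its actual old in-branch load is at least $\sum_{b} 2\ell(S_{a,b}) = q(a)$. Thus the increase is at most $w(f^{-1}(a)) - q(a) = h_f(a) \leq \hat{\ep} D$, and since no new tours are created the fleet budget $k$ is preserved, giving makespan at most $(1 + \hat{\ep})D$.

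The main obstacle is calibrating the weights so the overload guarantee of Lemma~\ref{lem:assign} converts cleanly into a length increase: one must define $w(a,b) = 2\ell(S_{a,b})$ rather than $a$'s true (possibly larger) traversal length inside $b$, so that $q(a)$ is genuinely a \emph{lower} bound on the old in-branch load, while simultaneously keeping the client-weight inequality $w(b) \leq \sum_{a \in N(b)} w(a,b)$ valid. The feasibility check---confirming that redistributing whole branches among the tours that already visit them never disconnects a tour---is the other delicate point, and it is precisely what the tour-independence definition is set up to guarantee.
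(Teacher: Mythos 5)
Your proposal is correct and follows essentially the same route as the paper: the same bipartite instance with tours as facilities and condensed branches as clients, the same weights $w(a,b)=2\ell(S_{a,b})$ and $w(b)=2\ell(b)$, an application of Lemma~\ref{lem:assign} to bound the overload by $\hat{\ep}D$, and the same tour-independence argument for feasibility. The extra care you take in justifying that $q(a)$ lower-bounds the tour's actual old in-branch load is a slightly more explicit version of what the paper leaves implicit, but the argument is the same.
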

\begin{proof}
The {\sc condense} operation is equivalent to requiring every branch in $\mathcal{B}$ to be covered by a single tour.  We show that there is such a solution of makespan at most $OPT + \hat{\ep} D$. Fix an optimal solution, and let $A$ be the set of tours in the optimal solution that (at least partially) cover branches in $\mathcal{B}$.  We define an edge-weighted bipartite graph $G =(A,\mathcal{B},E)$ where there is an edge $(a,b)$ if and only if tour $a$ contains edges of branch $b$, and $w(a,b)$ is the length of the tour segment of $a$ in branch $b$, namely twice the length of the edges covered by tour $a$. Note that $\forall a\in A,b\in \mathcal{B}$, $w(a,b)\leq \hat{\ep} D$.  For each $b\in \mathcal{B}$, we define the weight $w(b)$ to be $2\ell(b)$, and for each $a \in A$, we define the capacity $q(a)$ to be the sum $\sum_{b:a\cap b \neq \emptyset} w(a,b)$ of all tour segments of $a$ in branches of $\mathcal{B}$. Clearly, $w(b) \leq \sum_{a:a\cap b \neq \emptyset} w(a,b)$, since these tour segments collectively cover $b$.  

Essentially, $q(a)$ represents tour $a$'s \emph{budget} for  buying whole branches and is defined by the length of its tour segments in the branches that it partially covers. Further, we will only assign a branch to a tour that already covers some edges in the branch so there is no additional cost to connect the tour to the branch. 

Applying Lemma~\ref{lem:assign} to $G$, we can achieve an assignment of branches to tours such that each branch is assigned to one tour and the capacity of each tour is exceeded by at most $\max_{b\in B} w(b) \leq \hat{\ep} D$. Further, for any tour $a \in A$, let $T'_a$ be the corresponding subgraph visited by $a$ excluding any branches in $B$. $T'_a$ contains $r$ and is connected, so $N_G(a)\subseteq \mathcal{B}$ is a tour-independent set with respect to $T'_a$. Thus, the reassignment of branches creates a feasible solution in which the extra distance traveled by each tour is at most $\hat{\ep} D$. 
\end{proof}

\begin{figure}[!h]
\captionsetup[subfigure]{justification=centering}
\centering
  \subfloat[\label{fig:small_segments}]
    {\includegraphics[width=0.135\textwidth]{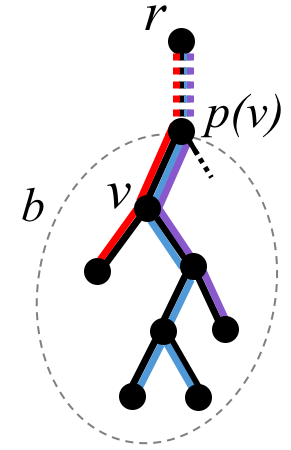}}
        \quad\vline\quad
  \subfloat[\label{fig:assign}]
    {\includegraphics[width=0.1\textwidth]{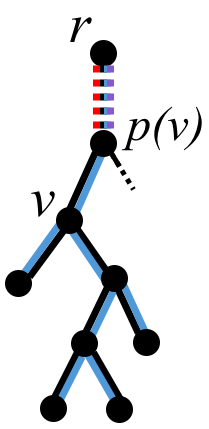}}
    \quad\vline\quad
  \subfloat[\label{fig:condensed}]
    {\includegraphics[width=0.1\textwidth]{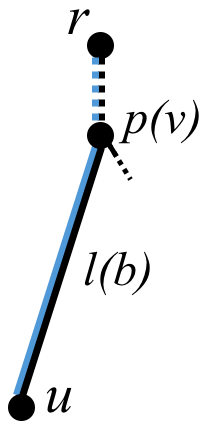}}
\caption{\label{fig:condense} Figure (a) depicts a branch $b \in \mathcal{B}$ covered by several small tour segments; (b) shows the entire branch $b$ being assigned to the blue tour; (c) shows the result of the condense operation.}
\end{figure}

\begin{lemma}\label{lem:small}
Requiring all leaf clusters and small clusters to have single coverage increases the makespan by at most $4\hat{\ep}D$.
\end{lemma}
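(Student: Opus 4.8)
The plan is to reuse the strategy of Lemma~\ref{lem:condense} almost verbatim, again driving the argument with the Reassignment Lemma (Lemma~\ref{lem:assign}), but now applying it to clusters instead of to the branches in $\mathcal{B}$. I would start from the near-optimal condensed solution guaranteed so far and treat the two cluster types separately. Leaf clusters need no real work: a leaf cluster is a single leaf edge of the condensed tree, so either it is a condensed branch---already covered by one tour by the {\sc condense} operation---or it is an uncondensed leaf edge, which no optimal solution would send two tours to; in both cases single coverage already holds for free.

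For the small clusters I would build a bipartite instance $G=(A,B,E)$ for Lemma~\ref{lem:assign}: let $B$ be the set of small clusters, let $A$ be the tours that cover at least one client of some small cluster, and include the edge $(a,C)$ exactly when tour $a$ covers a client (leaf edge) in $C$. As in Lemma~\ref{lem:condense}, set $w(a,C)$ to be the load of $a$'s segment inside $C$ (twice the length of the edges of $C$ traversed by $a$) and $q(a)=\sum_{C:\,a\cap C\neq\emptyset}w(a,C)$, and let $w(C)$ be the load a single tour needs to cover all clients of $C$, namely twice the length of the leaf edges of $C$ together with the portion of its backbone path up to the farthest client. The hypothesis $w(C)\le\sum_{a\in N(C)}w(a,C)$ holds because the client-covering tours jointly traverse every leaf edge of $C$ and, in particular, the tour reaching the farthest client traverses the backbone up to it. Lemma~\ref{lem:assign} then yields an assignment whose overload on any single tour is at most $\max_C w(C)$, which, since every small cluster has load below $\frac{\hat{\ep}\delta}{2}D$, is comfortably within the claimed $4\hat{\ep}D$ bound once the (zero) leaf-cluster contribution is added.

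For feasibility I would argue exactly as before: each tour $a\in N(C)$ already enters the connected subgraph $C$, so the small clusters adjacent to $a$ form a tour-independent set with respect to the part of $a$ lying outside them, and absorbing whole clusters keeps each tour connected to $r$; meanwhile relieving the other tours of their discarded segments only lowers their loads. The one genuinely new difficulty, compared with {\sc condense}, is that a small cluster is a backbone path with leaf edges attached rather than a single subtree, so the tour forced to take it over may have to extend along the backbone to reach the farthest leaf. The crux is therefore to define $w(C)$ so that it pays precisely for this backbone extension while still satisfying the assignment hypothesis, and to check that the backbone beyond the farthest client remains covered by the untouched passing tours, so that no client is orphaned and the solution stays feasible.
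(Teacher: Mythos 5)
There is a genuine gap in your small-cluster step. You feed the small clusters into Lemma~\ref{lem:assign} with capacities $q(a)=\sum_{C}w(a,C)$ given by the load of $a$'s existing segments inside small clusters, exactly as in Lemma~\ref{lem:condense}. But the overload bound of Lemma~\ref{lem:assign} only controls $\sum_{C:f(C)=a}w(C)-q(a)$; this translates into a bound on the \emph{increase in tour length} only if a tour that loses a cluster actually gets the full refund $w(a,C)$. That holds for the branches of $\mathcal{B}$, which are pendant subtrees, but fails for small clusters: a small cluster is a woolly edge, i.e., a piece of \emph{backbone} with leaves attached, and a tour that covers clients both in $C$ and in clusters farther from the root must keep traversing $P_C$ even after $C$ is assigned to another tour. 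The backbone portion of $w(a,C)$ is therefore budget that is never released, and the net length increase of a tour is bounded only by $\max_C w(C)$ \emph{plus} the backbone load it carries through the small clusters it touches. Concretely, take a tour that passes through $m$ small clusters, each with backbone load $\beta$ and leaf load $\lambda=\beta$, covering one negligible leaf in each: its capacity is about $2m\beta$, so an assignment satisfying the overload guarantee may hand it a constant fraction of these clusters, and the leaf edges it must newly cover then add up to $\Theta(m\beta)$, which can be a constant fraction of $D$ rather than $O(\hat{\ep})D$. (The same issue undercuts your feasibility argument: the ``part of $a$ lying outside the small clusters'' need not be connected, so the tour-independence property does not apply as stated.)

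The missing idea is that small clusters are not handled by the reassignment lemma at all. The paper assigns each small cluster to a \emph{descendant leaf cluster} --- the binary structure of the cluster tree allows this with at most two small clusters per leaf cluster --- and requires the unique tour covering that leaf cluster to absorb each assigned small cluster entirely; feasibility is immediate because that tour already passes through every ancestor small cluster on its way to the root. The error is then charged to the leaf clusters: a tour of length at most $D$ covers at most $2/\hat{\ep}$ leaf clusters, since each has load at least $\frac{\hat{\ep}}{2}D$, hence absorbs $O(1/\hat{\ep})$ small clusters, each of load $O(\hat{\ep}^2D)$, for a total increase of $O(\hat{\ep}D)$. Your treatment of leaf clusters themselves is fine (they are leaf edges of the condensed tree and are singly covered after {\sc condense}), but the $4\hat{\ep}D$ bound comes entirely from this charging argument, which your proposal does not contain.
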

\begin{proof}
After condensing the tree, all leaf clusters have single coverage, and the effect on makespan was covered in Lemma~\ref{lem:condense}. Because of the binary tree structure, we can \emph{assign} each small cluster to a descendant leaf cluster in such a way that each leaf cluster is assigned at most two small clusters.  Since each leaf cluster is covered by a single tour, we can require this tour to also cover the clients of the small cluster(s) assigned to that leaf cluster.  This is feasible since small clusters are only assigned to \emph{descendant} leaf clusters.  Furthermore, since leaf clusters have length at least $\frac{\hat{\ep}}{4}D$, we can \emph{charge} this error to the length of the leaf clusters. In particular, since any given tour covers at most $D/(2\cdot\frac{\hat{\ep}}{4} D) = \frac{2}{\hat{\ep}}$ leaf clusters, this assignment adds at most $2\cdot \frac{2}{\hat{\ep}} \cdot (2\cdot \frac{\ep\hat{\ep}}{2}D) = 4\hat{\ep} D$ to the makespan.  
\end{proof}

\begin{lemma}\label{lem:split}
Requiring every edge cluster to have single or split coverage adds at most $3\hat{\ep} D$ to the optimal makespan.
\end{lemma}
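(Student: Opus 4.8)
The plan is to transform a given near-optimal solution cluster by cluster into one in which every edge cluster has single or split coverage, charging the extra load through two applications of the Reassignment Lemma (Lemma~\ref{lem:assign}). Fix an edge cluster $C$ with backbone $P_C = p_{C,1},\ldots,p_{C,m}$ and leaves $e_{C,1},\ldots,e_{C,m-1}$. Call a route \emph{through} if it traverses all of $P_C$ (these are the $C$-passing and $C$-collecting routes) and \emph{turn-back} otherwise (the $C$-ending routes). The first fact I would record is that, because every edge cluster lies on a backbone path descending to a leaf cluster that still must be served, at least one through route visits $C$. This guarantees a candidate for the collecting role and, crucially, means I never have to demote a through route, which would otherwise strand a downstream leaf cluster.

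The key structural choice is the \emph{split point}: let $i$ be the largest index such that leaf $e_{C,i}$ is covered by some turn-back route (set $i=0$ if no turn-back route covers a leaf of $C$). By maximality, every suffix leaf $e_{C,i+1},\ldots,e_{C,m-1}$ is covered \emph{only} by through routes. I would then (i) reassign the entire prefix $e_{C,1},\ldots,e_{C,i}$ to a single turn-back route, which becomes the $C$-ending route, and (ii) reassign the entire suffix $e_{C,i+1},\ldots,e_{C,m-1}$ to a single through route, which becomes the $C$-collecting route; all other routes simply drop whatever edge-cluster leaves they previously covered, which only decreases their load. Since the ending route includes only $p_{C,1},\ldots,p_{C,i}$ with $i\le m-1<m$, it remains a turn-back route, and the collecting route already traverses $P_C$, so both reassignments keep the solution feasible and leave downstream coverage untouched. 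When $i=m-1$ or $i=0$ one of the two parts is empty and $C$ ends up with single coverage, matching the lemma's disjunction.

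To bound the added load globally I would build two bipartite instances for Lemma~\ref{lem:assign}, one for prefixes and one for suffixes, whose facilities are the routes and whose clients are the edge clusters. For the prefix instance the client weight is the load of the full prefix (backbone $p_{C,1},\ldots,p_{C,i}$ plus leaves $e_{C,1},\ldots,e_{C,i}$) and $w(a,C)$ is the load route $a$ already spends inside $C$; the precondition $w(C)\le\sum_a w(a,C)$ holds because the deepest turn-back route already traverses $p_{C,1},\ldots,p_{C,i}$ and the prefix leaves are collectively covered, while tour-independence (exactly as in Lemma~\ref{lem:condense}) makes each assigned prefix attachable to a route that already reaches into $C$. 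For the suffix instance I set $w(a,C)$ to be just the suffix leaves $a$ covers, since the collecting route needs no extra backbone, so the precondition holds with equality. In each instance the maximum client weight is at most a full edge-cluster load, which is at most $\tfrac{\hat\ep}{2}D$, so Lemma~\ref{lem:assign} caps the net new load per route at $\tfrac{\hat\ep}{2}D$ per instance, for at most $\hat\ep D \le 3\hat\ep D$ in total.

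The main obstacle is the interaction between the split point and the feasibility of the extensions, not the arithmetic. The point $i$ must be chosen so that the suffix is served purely by through routes; otherwise consolidating the suffix could force a through route to turn back and leave a downstream leaf cluster uncovered. Correspondingly, the two reassignment instances must weight edges so that the backbone the ending route is forced to grow is paid for, while the collecting route, which already owns $P_C$, is charged only for leaves. Verifying that the precondition of Lemma~\ref{lem:assign} and the tour-independence needed for feasibility both survive this accounting is where the care lies; the slack between my $\hat\ep D$ estimate and the stated $3\hat\ep D$ leaves ample room to absorb these extension costs with a cruder charge if a cleaner separation of the two instances is preferred.
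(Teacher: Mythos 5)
There is a genuine gap in the prefix reassignment. You define the split point $i$ as the deepest leaf touched by a turn-back route and then insist the whole prefix $p_{C,1},\ldots,p_{C,i},e_{C,1},\ldots,e_{C,i}$ be assigned, via Lemma~\ref{lem:assign}, to a \emph{turn-back} route. For Lemma~\ref{lem:assign} to give any guarantee you need the precondition $w(b)\leq\sum_{a\in N(b)}w(a,b)$, and with facilities restricted to turn-back routes this can fail: a prefix leaf $e_{C,j}$ with $j<i$ may be covered entirely by through routes, so its load appears in $w(\text{prefix})$ but in no turn-back route's $w(a,C)$. (Concretely: $e_{C,1}$ long and served by a through route, $e_{C,2}$ tiny and served by a shallow turn-back segment; then $i=2$ and the turn-back capacities cannot pay for $e_{C,1}$.) Without the precondition, the underloaded-facility argument in Lemma~\ref{lem:assign} breaks and the overload is no longer bounded by $\max_b w(b)$, so the per-route error does not telescope across the many clusters a single route serves. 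If instead you admit through routes as facilities to restore the precondition, the lemma may hand the prefix to a through route, leaving you with two collecting-type routes in $C$ — a case your writeup never resolves.

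The paper sidesteps exactly this by splitting the work into three reassignments in which every bipartite client is, by construction, the union of what its adjacent facilities already cover, so the precondition is automatic: first the region $\gamma(C)$ covered by $C$-ending segments is consolidated into one ending tour; then the leaf set $\gamma(L_C)$ covered by $C$-collecting segments is consolidated into one collecting tour; and only then is the residual overlap between the single ending tour $a_1$ and the single collecting tour $a_2$ resolved by a third reassignment whose outcome is explicitly allowed to be \emph{either} split coverage (overlap to $a_1$) \emph{or} single coverage with $a_1$'s now-redundant backbone segment deleted (overlap to $a_2$). Your single global split point conflates the second and third steps and forces an assignment direction that the Reassignment Lemma cannot deliver. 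The suffix half of your argument is essentially the paper's collecting-tour step and is fine; the fix is to stop trying to push through-route-covered prefix leaves onto a turn-back route in one shot, and instead follow the staged consolidation with the single-coverage fallback.
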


\begin{proof}
Fix an optimal solution, $SOL_0$.  We begin by limiting the number of $C$-ending tours. For every edge cluster $C$, let $\gamma(C)$ be the subcluster of $C$ covered by $C$-ending tour segments.  Let $\mathcal{C}_1 =\{\gamma(C)|C \in \mathcal{C}\}$ and let $A_1$ be the set of tours in $SOL_0$ that are $C$-ending tours for at least one edge cluster $C \in \mathcal{C}$.  
We define a bipartite graph $G_1 =(A_1,\mathcal{C}_1,E_1)$ where there is an edge $(a,\gamma(C))$ if and only if tour $a$ is a $C$-ending tour, and $w_{G_1}(a,\gamma(C))$ is the length of the tour segment of $a$ in edge cluster $C$. Note that $\forall a\in A_1,\gamma(C)\in \mathcal{C}_1$, $w_{G_1}(a,\gamma(C))\leq 2\ell(C) \leq\hat{\ep} D$.  We define the weight $w_{G_1}(\gamma(C)) = 2 \ell(\gamma(C))$, and for each $a \in A$, we define the capacity $q(a)$ to be the sum $\sum_{C:\gamma(C) \in N_{G_1}(a)} w_{G_1}(a,\gamma(C))$.  Clearly, $w_{G_1}(\gamma(C)) \leq \sum_{a \in N_{G_1}(\gamma(C))} w_{G_1}(a,\gamma(C))$, since these tour segments collectively cover $\gamma(C)$.  Therefore we can apply Lemma~\ref{lem:assign} to $G_1$ to achieve an assignment of  subclusters in $\mathcal{C}_1$ to tours such that each subcluster is assigned to one tour and the capacity of each tour is exceeded by at most $\max_{\gamma(C)\in \mathcal{C}_1} w_{G_1}(\gamma(C)) \leq \hat{\ep} D$. For any tour $a \in A_1$, let $T'_a$ be the corresponding subgraph visited by $a$, excluding any edge clusters $C$ for which $a$ is a $C$-ending tour.  $T'_a$ contains $r$ and is connected. Since, for each tour $a$, $N_{G_1}(a)$ is a tour-independent set with respect to $T'_a$, this assignment forms a new feasible solution, $SOL_1$.

At this point each edge cluster $C$ has at most one $C$-ending tour.  We now address $C$-collecting tours.  For every edge cluster $C$, let $\gamma(L_C)\subseteq L_C$ denote the set of leaf edges of $C$ that are covered by $C$-collecting tours.  Let $\mathcal{L}_2 =\{\gamma(L_C)| C \in \mathcal{C}\}$, and let $A_2$ be the set of tours in $SOL_1$ that are $C$-collecting tours for at least one cluster $C \in \mathcal{C}$.  We define a bipartite graph $G_2 =(A_2,\mathcal{L}_2,E_2)$ where there is an edge $(a,\gamma(L_C))$ if and only if tour $a$ is a $C$-collecting tour, and $w_{G_2}(a,\gamma(L_C))$ is twice the length of the leaves in $\gamma(L_C)$ covered by $a$.  Note that $\forall a\in A_2,C\in \mathcal{C}$, $w_{G_2}(a,\gamma(L_C))\leq 2 \ell(C) \leq\hat{\ep} D$.  We define the weight $w_{G_2}(\gamma(L_C)) = 2\ell(\gamma(L_C))$, and for each $a \in A$, we define the capacity $q(a)$ to be the sum $\sum_{C:\gamma(C) \in N_{G_2}(a)} w_{G_2}(a,\gamma(L_C))$ of tour segments covering leaves in $C$-collecting tour segments of $a$.  Clearly, $w_{G_2}(\gamma(L_C)) \leq \sum_{a\in N_{G_2}(\gamma(C))} w_{G_2}(a,C)$, since these tour segments collectively cover $\gamma(L_C)$.  Therefore we can apply Lemma~\ref{lem:assign} to $G_2$ to achieve an assignment of $\mathcal{L}_2$ leaf sets to tours such that each leaf set is assigned to a single tour and the capacity of each tour is exceeded by at most $ \max_{C\in \mathcal{C}} w_{G_2}(\gamma(L_C)) \leq \hat{\ep} D$. 
For any tour $a \in A_2$, let $T''_a$ be the corresponding subgraph visited by $a$, excluding all leaf sets in $L_C$.  $T''_a$ contains $r$ and is connected. 
Since, for each tour $a$, $N_{G_2}(a)$ is a tour-independent set with respect to $T''_a$, this assignment forms a new feasible solution, $SOL_2$.  Note that if some $C$-collecting tour $a'$ is not assigned $\gamma(L_C)$, it becomes a $C$-passing tour.

At this point every edge cluster $C$ has at most one $C$-collecting tour and at most one $C$-ending tour. If $C$ has single or split coverage, we are done. Otherwise, let $a_1$ be the $C$-ending tour and $a_2$ be the $C$-collecting tour.  Let $j<m$ be the largest index such that $a_1$ covers leaf $e_{C,j}$.  Since $j$ is the largest such index, then $a_2$ covers all leaves $e_{C,i}$ for $j<i<m$.  If $a_1$ covers \emph{all} leaves $e_{C,1},e_{C,2},...,e_{C,j}$, then $C$ has split coverage, and we are done.  Otherwise, both $a_1$ and $a_2$ contain the backbone edges $P_{C}' = \{p_{C,1}, \ldots, p_{C,j}\}$ and a subset of the leaf edges $L_{C}' = \{e_{C,1}, \ldots, e_{C,j}\}$. Using the same argument as above for $C$-collecting tours, we can assign the leaves $L_{C}'$  to exactly one of these tours while adding at most $\max_{C\in \mathcal{C}} 2\ell(L_{C}') \leq \hat{\ep} D$. If the leaves are assigned to $a_1$, then $C$ ends up having split coverage, and if the leaves are assigned to $a_2$, then the remaining tour segment of $a_1$ along the backbone is redundant and can be removed, resulting in every edge cluster $C$ having single coverage.  

The resulting solution $SOL_3$ satisfies the lemma statement.
\end{proof}

We now prove Theorem~\ref{thm:structure_makespan}.

\begin{proof}
Given Lemma~\ref{lem:condense}, Lemma~\ref{lem:small}, and Lemma~\ref{lem:split}, all that remains in proving Theorem~\ref{thm:structure_makespan} is to bound the number of clusters that a single vehicle covers clients in.   To do so, we first show that requiring the length of every $C$-ending or $C$-collecting tour segment to be at least $\frac{\hat{\ep}^3}{2}D$ adds a stretch factor of at most $1+\hat{\ep}$ to the makespan.

Consider any edge cluster $C$ with split coverage.  Let $a_0$ be the $C$-ending tour, and let $a_1$ be the $C$-collecting tour. For $i = 0,1$ let $L_i$ be the length of $a_i$ within $C$. Suppose that some $L_i$ is less than $\frac{\hat{\ep}^3}{2}D$.  Since $L_{0}+L_{1} \geq 2\ell(C) \geq 2\frac{\hat{\ep}^2}{2}D = \hat{\ep}^2D$, then $L_{1-i} \geq \hat{\ep}^2(1-\frac{\hat{\ep}}{2})D$ and assigning $a_{1-i}$ to cover the clients served by $a_i$ increases the length of $a_{1-i}$ within $C$ by at most a factor of $1+\hat{\ep}$ since $L_{i}/L_{i-1} \leq \hat{\ep}$.  Reassigning these small tour segments within all edge clusters with split coverage increases the length of any given tour by at most a factor of $1+\hat{\ep}$.

After all of the above steps, there exists a solution with makespan at most $(1+\hat{\ep})(1+8\hat{\ep})D$. Consider some such solution and a single tour $a$.  Let $X_1$ be the set of leaf clusters that $a$ covers.  Since any such leaf cluster has length at least $\frac{\hat{\ep}}{4}D$ and is entirely covered by $a$, $|X_1|$ is $O(\frac{1}{\hat{\ep}})$.  Let $X_2$ be the set of small clusters that $a$ covers.  Since $a$ covers exactly those small clusters that are assigned to leaf clusters in $X_1$, $|X_2|\leq 2|X_1|$ and is also $O(\frac{1}{\hat{\ep}})$, since each leaf cluster is assigned at most two small clusters. Last, let $X_3$ be the set of edge clusters $C$ for which $a$ is either a $C$-ending tour or a $C$-collecting tour.  Since the length of tour $a$ is at most $(1+O(\hat{\ep}))D$, and each of $t$'s tour segments in $X_3$ has length at least $\frac{\hat{\ep}^3}{2}D$, $|X_3|$ is $O(\frac{1}{\hat{\ep}^3})$.

\end{proof}

\subsection{{\sc Minimum Makespan Vehicle Routing} Dynamic Program}\label{sec:DP}

Having proven a structure theorem, we now present a dynamic programming algorithm (DP) that actually finds a near-optimal solution with simple structure.

Recall, the DP traverses cluster tree $T^*$ starting at the leaves and moving rootward.  A configuration is a vector in $\{0,1,2,...,k\}^{2\hat{\ep}^{-4}}$.  A configuration $\vec{x}$ at a vertex $v$ is interpreted as a set of tours \emph{projected up to $r$} in $T$ that cover all clusters in the subtree of $T^*$ rooted at $v$. For $i \in \{1,2,...,2\hat{\ep}^{-4}\}$, $\vec{x}[i]$ is the number of tours in the set that have \emph{rounded} length $i\hat{\ep}^{4}D$.  That is, the actual tours that correspond to the $\vec{x}[i]$ tours represented in the vector each have length that may be less than $i\hat{\ep}^{4}D$. 

The algorithm categorizes the vertices into three different cases and handles them separately.  The \emph{base cases} are the leaves of $T^*$.  Let $v\in T^*$ be such a leaf, let $L_v$ be the corresponding leaf cluster in $T$, and let $u$ be the vertex at which $L_v$ meets the backbone.  When the algorithm determines the configuration for $v$ it addresses covering both $L_v$ as well as covering any small clusters $C_1,...,C_h$ that are assigned to $L_v$.  Let $\ell_{small}$ be the length of all of the \emph{leaves} of these small clusters, namely $\ell_{small} = \ell(\bigcup_{1\leq i\leq h}C_i\setminus\text{backbone})$.  Let $\ell_0$ be $2(\ell(L_v) + \ell_{small} + d_T(u,r))$ rounded up to the nearest $\hat{\ep}^4D$.  The only configuration stored at $v$ is $\vec{x}$ such that $\vec{x}[\ell_0] = 1$ and $\vec{x}[j]=0, \forall j\neq \ell_0$.  All cluster lengths and distances to the depot can be precomputed in linear time, after which each base case can be computed in constant time.

The \emph{grow cases} are the vertices in $T^*$ that correspond to edge clusters in $T$.  Let $v\in T^*$ be such a vertex, and let $C_v$ be the corresponding edge cluster in $T$.  Let $u$ be the root-most vertex in $C_v$, and let $v'\in T^*$ be the lone child vertex of $v$.  Note that $v'$ may correspond to a branching backbone vertex, a leaf cluster or another edge cluster, but by construction, $v$ has exactly one child.  Since $C_v$ has single or split coverage, at most two tours in any configuration at $v$ are involved in covering the leaves of $C_v$: all other tours in the configuration are $C_v$-passing tours, and their representation in the configuration remains unchanged.  The algorithm considers all possible rounded tour lengths $\ell_1$ for a $C_v$-ending tour $t_1$ for the configuration (including not having such a tour) and for each such $t_1$, the algorithm considers all possible (rounded) lengths $\ell_2$ for an incoming $C_v$-collecting tour $t_2$, \emph{before} the remaining length from covering leaves in $C_v$ is added to the tour.  Given $\ell_1$ and $\ell_2$, the algorithm can easily compute the resulting rounded length $\ell_3$ of $t_2$ \emph{after} covering its share of $C_v$ leaves.  For each configuration $\vec{x}'$ for child vertex $v'$, the algorithm determines configuration $\vec{x}$ for $v$ such that $\vec{x}[\ell_1]=\vec{x}'[\ell_1]+1$, $\vec{x}[\ell_2]=\vec{x}'[\ell_2]-1$, $\vec{x}[\ell_3]=\vec{x}'[\ell_3]+1$, and $\vec{x}[i]=\vec{x}'[i]$ otherwise.  If the resulting $\vec{x}$ is feasible, it is stored at $v$.  Since there are at most $2\hat{\ep}^{-4}$ options for $\ell_1$ and $\ell_2$ and at most $k^{2\hat{\ep}^{-4}}$ configurations at $v'$, the runtime for each grow case is $k^{O(\hat{\ep}^{-4})}$.

Finally, the \emph{merge cases} are the vertices in $T^*$ that correspond to branching backbone vertices in $T$ as well as the depot.  Let $v\in T^*$ be such a vertex, and let $u$ be the corresponding vertex in $T$. Let $v_1, v_2\in T^*$ be the two children of $v$ in $T^*$.  Every tour $t$ in a configuration at $v$ will either be a directly inherited tour $t_i$ of rounded length $\ell_i$ from a configuration at $v_i$ for $i \in \{1,2\}$, or will be a \emph{merging} of some tour $t_1$ from $v_1$ and some $t_2$ from $v_2$ with resulting length $\ell_1 + \ell_2 -2\ell(u,r)$ rounded up to the nearest $\hat{\ep}^4D$ (recall that $t_1$ and $t_2$ are tours from the depot so the subtracted amount addresses over-counting the path to the depot).  For every possible $(\ell_1, \ell_2)$ (including lengths of zero to account for tours inherited by children), the algorithm \emph{guesses} the number of tours at $v$ that resulted from merging a tour of length $\ell_1$ from $v_1$ with a tour of length $\ell_2$ from $v_2$.  Each of these possibilities corresponds to a configuration $\vec{x}_i$ at $v_i$ for $i\in \{1,2\}$ and to a merged configuration $\vec{x}$ at $v$.  If $\vec{x}_1$ and $\vec{x}_2$ are valid configurations stored at $v_1$ and $v_2$, respectively, then the algorithm stores $\vec{x}$ at $v$.  There are $k^{4\hat{\ep}^{-8}}$ such possibilities, so the runtime of each merge case is $k^{O(\hat{\ep}^{-8})}$. Note that the dynamic program only considers storing feasible configurations $\vec{x}$ at vertex $v$ so the algorithm maintains that there are at most $k$ tours total. 

Since for any $\ep>0$ the DP has a polynomial runtime, the following lemma completes the proof of Theorem~\ref{thm:makespan_ptas}.

\begin{lemma}\label{lem:dp} The dynamic program described above finds a tour with maximum makespan at most $(1+\ep) D$.
\end{lemma}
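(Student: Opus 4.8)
The plan is to establish two complementary facts: \textbf{(soundness)} every configuration the DP stores at the root $r$ of $T^*$ corresponds to a genuine feasible set of at most $k$ tours whose true makespan is no larger than the makespan recorded in the configuration; and \textbf{(completeness)} the near-optimal simple-structure solution guaranteed by Theorem~\ref{thm:structure_makespan} is represented by some configuration the DP actually reaches, whose recorded makespan is at most $(1+\ep)D$. Together these imply that the solution the DP returns at $r$ is feasible and has true makespan at most $(1+\ep)D$.

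For soundness I would induct on $T^*$ from the leaves rootward, with the invariant that each stored configuration $\vec{x}$ at a vertex $v$ is realized by an actual set of tours, each projected up to $r$, that together cover every cluster in the subtree of $T^*$ rooted at $v$, and such that the \emph{true} length of each represented tour is at most its recorded rounded length $i\hat{\ep}^4 D$. The base case holds because $\ell_0$ is exactly the round-trip length $2(\ell(L_v)+\ell_{small}+d_T(u,r))$ rounded up. For the grow transition I would check that $\ell_3$ correctly records the collecting tour's length after it absorbs its share of the leaves of $C_v$, and that the passing tours are copied unchanged; for the merge transition I would check that the combined length $\ell_1+\ell_2-2\ell(u,r)$, rounded up, dominates the true merged length (each projected partial tour double-counts the $u$-to-$r$ path, so subtracting $2\ell(u,r)$ exactly corrects the overcount). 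Since every rounding is upward and each transition preserves the length invariant, the feasibility filter at $r$ (all recorded loads $\le(1+\ep)D$ and at most $k$ tours) certifies a valid $k$-tour solution of true makespan at most $(1+\ep)D$.

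For completeness I would take the solution $SOL^*$ from Theorem~\ref{thm:structure_makespan}, which has simple structure and makespan at most $(1+O(\hat{\ep}))D$, and trace it through $T^*$: each base case records the tour that singly covers a leaf cluster together with its assigned small clusters, each edge-cluster vertex is handled by the grow transition using $SOL^*$'s single or split coverage (the values $\ell_1,\ell_2,\ell_3$), and each branching backbone vertex is handled by merging the two partial tours of $SOL^*$ arriving from its children. I would define the rounded load recorded at each step and verify it satisfies the corresponding DP transition, so the matching configuration is in fact stored; the index range $2\hat{\ep}^{-4}$ (hence maximum recordable length $2D$) suffices because the recorded makespan never exceeds $2D$.

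The heart of the proof, and the step I expect to be the main obstacle, is bounding the total rounding error accumulated on any single tour of $SOL^*$, since every rounding inflates its recorded length by up to $\hat{\ep}^4 D$. The projection technique is precisely what makes this tractable: because every tour is stored as projected to $r$, a $C$-\emph{passing} tour's recorded length is unchanged as it traverses $C$, so passing clusters incur \emph{no} rounding. A given tour is therefore rounded only (a) once at its base case, (b) once at each edge cluster where it collects leaves, and (c) once at each branching point where load arriving from both children is genuinely merged into it. By the simple-structure guarantee each tour covers clients in $O(\tfrac{1}{\hat{\ep}^3})$ clusters, and the number of genuine merges along a tour is at most one less than this count (the covered clusters are the leaves of a binary merge tree inside $T^*$), so the tour is rounded $O(\tfrac{1}{\hat{\ep}^3})$ times and accumulates at most $O(\tfrac{1}{\hat{\ep}^3})\cdot\hat{\ep}^4 D = O(\hat{\ep})D$ rounding error. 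Adding this to the $(1+O(\hat{\ep}))D$ makespan of $SOL^*$ and choosing $\hat{\ep}=\ep/c$ for a suitable constant $c$ yields a recorded makespan of at most $(1+\ep)D$, which completes the proof.
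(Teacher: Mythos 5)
Your proposal is correct and follows essentially the same route as the paper: the paper's proof likewise argues that a tour is rounded only at the $O(1/\hat{\ep}^3)$ clusters in which it covers clients and at the (equally many) branching vertices where it genuinely branches, giving $O(\hat{\ep})D$ rounding error on top of the $c_1\hat{\ep}D$ structure-theorem error, and then sets $\hat{\ep}=\ep/(c_1+c_2)$. Your explicit soundness/completeness split and the binary-merge-tree count of branching points are just more detailed renderings of steps the paper states tersely.
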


\begin{proof}
By Theorem~\ref{thm:structure_makespan}, each tour covers clients in $O(1/\hat{\ep}^3)$ clusters and has accumulated $c_1\hat{\ep}D$ error during the simplification steps, for some constant $c_1$.  Notably, each tour only covers the backbone edges that lead to cluster in which it covers clients, since otherwise these edges can be removed.  Therefore, the number of backbone branching vertices that a tour branches at is also $O(1/\hat{\ep}^3)$.  Note that the tour might pass other backbone branching vertices without the tour itself actually branching there.  The dynamic program rounds tour lengths up to the nearest $\hat{\ep}^{4}D$ exactly at the clusters in which the tour covers clients and at the backbone branching vertices at which the tour branches.  Therefore each tour incurs an error of at most $\hat{\ep}^{4}D$ at each of the $O(1/\hat{\ep}^3)$ times that the dynamic program rounds the tour, for a total rounding error of $c_2\hat{\ep} D$ for some constant $c_2$.  The total increase in the makespan is $(c_1+c_2)\hat{\ep}D$.  Setting $\hat{\ep}$ to $\frac{\ep}{c_1+c_2}$ completes the proof.
\end{proof}

\section{{\sc Capacitated Vehicle Routing}}\label{sec:capacity}

Recall that in {\sc Capacitated Vehicle Routing}, each tour can cover at most $Q$ clients and the objective is to minimize the \emph{sum} of the tour lengths (there is no constraint on the number of tours or the length of each tour).  Because {\sc Capacitated Vehicle Routing} is a {\sc Minimum Fleet Budget} problem, we must apply the framework directly to the associated {\sc Min-Max Vehicle Load} problem, {\sc Min-Max Client Capacity} that seeks to minimize the maximum client capacity of a vehicle, $Q$, given a bound $k$ on the sum of tour lengths.

Customizing the framework to {\sc Min-Max Client Capacity} is nearly identical to {\sc Minimum Makespan Vehicle Routing}, except that vehicle load is the number of clients, rather than length, and the fleet budget is sum of tour lengths, rather than the number of vehicles.  Correspondingly, we define the load $g(H)$ of a subgraph $H$ to be the \emph{number} of clients in the subgraph.  Otherwise, the solution simplification and analysis proceed as in {\sc Minimum Makespan Vehicle Routing}, noting that because the objective function is the sum of tour lengths, assigning an entire branch to be covered by a single tour never increases the fleet budget.

The {\sc Min-Max Client Capacity} DP configuration is a vector in $\{0,1,2,...,n\}^{\hat{\ep}^{-4}}$, where a configuration $\vec{x}$ at a vertex $v$ is interpreted as a set of tours in $T$ that cover all clusters in the subtree of $T^*$ rooted at $v$.  Now, however, for $i \in \{1,2,...,\hat{\ep}^{-4}\}$, $\vec{x}[i]$ is the number of tours in the set that have rounded \emph{number of clients} $i\hat{\ep}^{4}Q$.  That is, the actual tours that correspond to the $\vec{x}[i]$ tours represented in the vector each cover at most $i\hat{\ep}^{4}Q$ clients. For each of these configurations, the DP stores the minimum total length of a solution that is consistent with the configuration.

The resulting PTAS for {\sc Min-Max Client Capacity} gives the following bicriteria result for {\sc Capacitated Vehicle Routing} as a corollary.

\begin{theorem}
Given an instance of {\sc Capacitated Vehicle Routing} on a tree, if there exists a solution of total length $k$ and capacity $Q$, then for any $\ep > 0$, there is a polynomial-time algorithm that finds a solution of total length $k$ and capacity at most $(1+\ep)Q$.
\end{theorem}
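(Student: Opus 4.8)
The plan is to derive this statement as a corollary of a PTAS for the associated {\sc Min-Max Vehicle Load} problem, {\sc Min-Max Client Capacity}, exactly as Theorem~\ref{thm:makespan_ptas} yields the bicriteria {\sc Distance-Constrained Vehicle Routing} result. The starting point is the decision-problem equivalence noted in Section~\ref{sec:preliminaries}: a set of tours is a {\sc Capacitated Vehicle Routing} solution of total length $k$ and capacity $Q$ if and only if it is a feasible {\sc Min-Max Client Capacity} solution with fleet budget (total length) $k$ and vehicle load (client count) $Q$. Thus, assuming such a solution exists, it suffices to run the overarching {\sc Min-Max Client Capacity} algorithm with target load $Q$ and budget $k$ and argue that it returns a solution of total length at most $k$ whose maximum client count is at most $(1+\ep)Q$; no binary search is needed here, since $Q$ is supplied directly. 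The remaining work is therefore to establish the {\sc Min-Max Client Capacity} PTAS itself, which I would do by instantiating the framework with the load function $g(H)$ equal to the number of clients in $H$. Monotonicity and subadditivity of $g$ are immediate, so the {\sc condense} and clustering constructions apply unchanged.

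Next I would prove the structure theorem for {\sc Min-Max Client Capacity}, mirroring Theorem~\ref{thm:structure_makespan}. The reassignment-based simplification steps --- the client-count analogues of Lemma~\ref{lem:condense}, Lemma~\ref{lem:small}, and Lemma~\ref{lem:split} --- all go through by applying the {\sc Bipartite Weight-Capacitated Assignment} tool (Lemma~\ref{lem:assign}) with weights measured in clients rather than length. Each application consolidates the coverage of a branch, subcluster, or leaf set onto a single tour while increasing that tour's client load by at most $\max_b w(b) = O(\hat{\ep})Q$, and the tour-independence argument guarantees feasibility of the resulting tours. Summing the $O(1)$ steps gives a simple-structure solution whose maximum client load exceeds $Q$ by at most $O(\hat{\ep})Q$, and the same counting argument as in the makespan case bounds the number of clusters whose clients any single tour covers by $O(1/\hat{\ep}^3)$.

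The one genuinely new point --- and the step I expect to be the main obstacle --- is verifying that these reassignments are admissible with respect to the fleet budget, which for this problem is the \emph{total tour length} rather than a fixed vehicle count. I must check that consolidating the coverage of a connected subgraph onto one tour never increases the total length. This holds because the instance is a tree: every edge of the subgraph must be traversed regardless of how coverage is assigned, and requiring a single tour to cover a connected piece that it already partially covers can only \emph{reduce} the number of times the shared backbone edges are retraversed. Hence each simplification step keeps the total length at most $k$, so the structure theorem produces a simple-structure solution that simultaneously respects the budget $k$ and has capacity $(1+O(\hat{\ep}))Q$. This is precisely the asymmetry flagged when the passage observes that assigning an entire branch to be covered by a single tour never increases the fleet budget.

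Finally I would run the dynamic program of Section~\ref{sec:DP} in its client-capacity form: configurations bucket tours by rounded client count, and for each configuration the DP stores the \emph{minimum total length} of a consistent solution, updating this stored length additively at the grow and merge cases. Because each tour covers clients in $O(1/\hat{\ep}^3)$ clusters, its client count is rounded $O(1/\hat{\ep}^3)$ times, for a total rounding error of $O(\hat{\ep})Q$; choosing $\hat{\ep} = \Theta(\ep)$ makes the combined simplification and rounding error at most $\ep Q$. At the root the algorithm returns any configuration all of whose occupied buckets correspond to capacity at most $(1+\ep)Q$ and whose stored total length is at most $k$; the structure theorem guarantees such a configuration exists, yielding the claimed solution of total length $k$ and capacity at most $(1+\ep)Q$.
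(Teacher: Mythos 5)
Your proposal is correct and follows essentially the same route as the paper: reduce to the associated {\sc Min-Max Client Capacity} problem with load $g(H)$ equal to the number of clients, carry the condense/cluster/reassignment steps over verbatim while observing that consolidating coverage of a connected piece onto one tour never increases total length in a tree, and run the DP with configurations bucketing tours by rounded client count while storing the minimum total length per configuration. The paper states this customization only in outline, so your proposal simply fills in the same argument at greater length.
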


\section{School Bus Routing}\label{sec:school_bus}

A rather different vehicle load measure is maximum client \emph{regret}: the maximum difference between the time a client is \emph{actually} visited and the earliest possible time that client \emph{could have} been visited.  Note that maximum client regret is always achieved for the \emph{final} client visited by a vehicle.  It is therefore convenient to assume the routes to be \emph{paths}. Recall, the regret of a $u$-to-$v$ path, $P$, is the difference between the length of the path and the distance from $u$ to $v$, namely $l(P)-d_T(u,v)$.

The {\sc Min-Max Regret Routing} problem, given a number of vehicles, $k$, is to cover all clients with $k$ paths starting from the depot, such that the maximum regret of a path is minimized. Here, vehicle load is the regret of the vehicle's path and fleet budget is the number of vehicles.  The related {\sc Minimum Fleet Budget} problem, the {\sc School Bus Routing} problem, given a regret bound $R$, is to find the smallest number of paths of regret at most $R$ that each start at the depot, $r$, and collectively cover all clients.  

In this section, we sketch how to customize our framework to the {\sc Min-Max Regret Routing} and achieve the following.

\begin{theorem}
\label{thm:regret}
For every $\ep > 0$, there is a polynomial-time algorithm that, given an instance of {\sc Min-Max Regret Routing} on a tree, finds a solution whose maximum regret is at most $1+\ep$ times optimum.
\end{theorem}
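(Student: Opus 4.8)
The plan is to instantiate the general framework of Section~\ref{sec:framework} with \emph{regret} as the vehicle load, mirroring the {\sc Minimum Makespan} development almost line for line. First I would fix the load function. Since routes are paths (walks) starting at the depot, a route covering a connected subgraph $H \ni r$ and ending at its farthest vertex has minimum regret $g(H) = 2\ell(H) - 2\max_{v\in H} d_T(r,v)$: traversing the deepest root-to-endpoint spine once is free, while every out-and-back detour off this spine is traversed twice and contributes twice its length to the regret. I would check that $g$ is monotone and subadditive (adding edges can raise the spine length by at most the length added, so $g$ never decreases, and the ``twice the off-spine length'' reading makes subadditivity routine), so that the condensing and clustering steps apply verbatim with $\delta = \hat{\ep} = \ep/c$.

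Next, the structure theorem. I would reuse Lemma~\ref{lem:assign} exactly as in Lemmas~\ref{lem:condense}, \ref{lem:small}, and \ref{lem:split}, replacing tour-independence by the same connectivity condition read for paths: each reassigned subgraph is a detour hung off a route that already contains $r$, so attaching it keeps the route a valid walk and raises its regret by precisely the reassigned weight (twice the covered detour length). Hence condensing, giving leaf and small clusters single coverage, and giving edge clusters single or split coverage each cost $O(\hat{\ep})D$ in regret, as before. To bound the number of clusters in which one path covers clients I would again enforce a minimum per-cluster covering contribution of order $\hat{\ep}^3 D$: covering clients in an edge cluster forces detours into its leaf edges, contributing regret at least twice the covered leaf length, so at most $O(1/\hat{\ep}^3)$ such clusters fit into a total regret of $(1+O(\hat{\ep}))D$. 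The single endpoint cluster of each path, whose coverage lies on the free spine, is one additive exception and does not affect the count.

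Finally, the dynamic program over $T^*$. The base and grow cases are essentially those of Section~\ref{sec:DP}, with the stored quantities being rounded \emph{regrets} (multiples of $\hat{\ep}^4 D$) rather than lengths; in a grow case a collecting route's regret grows by twice the detour length of the leaves it covers. The genuinely new ingredient is the merge case, and this is where I expect the main obstacle. Because the DP moves rootward, a path arriving at a branching vertex from below already has a fixed deepest endpoint, so the backbone above it is free; but if the same path also detours into the sibling branch, that entire sibling coverage is off-spine and must be charged its full twice-length, even though as a standalone sub-route it had been granted a free spine of its own. Regret is therefore \emph{not} additive across a merge: the DP must decide which child carries the spine (the deeper endpoint) and re-charge the other child its full return cost. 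This is exactly the role of the route-projection technique: by projecting each route toward its eventual destination and amortizing (pre-paying) for the spine it anticipates, the configurations can be combined correctly at merges while being rounded only once per cluster in which the route covers clients. Making this bookkeeping consistent---so that the projected regret is charged a single time, spine and detour roles are assigned coherently, and each path is rounded only $O(1/\hat{\ep}^3)$ times---is the crux. Once it is in place, the error analysis is identical to Lemma~\ref{lem:dp}: with $\theta = c_\theta\,\ep\,\hat{\ep}^3$ the rounding error is $O(\hat{\ep})D = O(\ep)D$, and Theorem~\ref{thm:regret} follows just as Lemma~\ref{lem:dp} yields Theorem~\ref{thm:makespan_ptas}.
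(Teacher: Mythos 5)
Your high-level plan (instantiate the framework with regret as the load) matches the paper's, but two of the steps you wave through are exactly where the regret setting breaks the makespan template, and your proposal does not supply the ideas the paper needs there. First, your load function $g(H)=2\ell(H)-2\max_{v\in H}d_T(r,v)$ is only defined for subgraphs containing $r$, whereas the framework needs to measure the load of \emph{local} pieces (branches to be condensed, woolly subedges to be clustered), and for those the regret depends on where the path enters and leaves the piece. The paper handles this by parametrizing the load as $g(H,u,v)$ (first and last vertex of the path in $H$), so that a branch costs twice its length when entered from outside but a woolly subedge costs only twice its \emph{wool}, with the backbone free. Second, and more seriously, your bound of $O(1/\hat{\ep}^3)$ clusters per path silently reuses the makespan reassignment: if one of the two tours sharing an edge cluster has a tiny segment, give its clients to the other. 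In the makespan setting this is cheap because the whole cluster (backbone included) has length $O(\hat{\ep}D)$, so either tour can absorb it. In the regret setting an edge cluster's backbone can be \emph{arbitrarily long} and is free only to a path that already traverses it; the paper isolates this as ``dispersed coverage'' (the $C$-collecting path has tiny regret in $C$, but the $C$-ending path cannot singly cover $C$ because it would have to pay for the rest of the backbone). The fix is a propagation reassignment along each woolly edge: the $C_i$-ending path absorbs the $C_{i+1}$-collecting work in the next rootward dispersed cluster, and the leftover bottom cluster is charged to a descendant leaf cluster. Without something like this, your cluster-count bound does not go through.

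On the dynamic program you correctly identify that merging regret is the crux, but you leave the mechanism unspecified, and the specific obstacle is not quite the one you describe. The paper splits routes into \emph{through paths} (which get the origin-to-depot spine for free and whose configurations store accumulated regret) and \emph{regret detours} (whose configurations store length); a detour may merge with a through path or another detour, but two through paths never merge, which resolves your ``which child carries the spine'' worry by fiat rather than by a DP choice. The remaining problem is that a detour can hang arbitrarily far from the root, so it cannot be projected to $r$ as in the makespan DP; the paper introduces breakpoints spaced between $\ep R$ and $R$ apart and projects each detour only to the next rootward breakpoint, adding $O(1/\ep)$ extra rounding events per path. Your proposal's ``pre-pay for the anticipated spine'' sketch does not yield a well-defined projection target for detours, so the rounding-count bound you assert in the last step is not established.
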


Both a bicriteria approximation scheme and a 2-approximation for the {\sc School Bus Routing} problem follow as corollaries.

\begin{theorem}\label{thm:school_bus}
Given an instance of the {\sc School Bus Routing} problem on a tree, if there exists a solution consisting of $k$ paths of regret at most $R$, then for any $\ep > 0$, there is polynomial-time algorithm that finds a solution consisting of $k$ paths of regret at most $(1+\ep)R$.
\end{theorem}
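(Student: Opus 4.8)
The plan is to obtain Theorem~\ref{thm:school_bus} as an immediate corollary of the PTAS for {\sc Min-Max Regret Routing} (Theorem~\ref{thm:regret}), exploiting the fact that {\sc School Bus Routing} and {\sc Min-Max Regret Routing} are merely two optimization views of a single decision problem: does there exist a collection of $k$ depot-rooted paths, each of regret at most $R$, that together cover every client? In {\sc School Bus Routing} the regret bound $R$ is fixed and the number of paths is minimized, whereas in {\sc Min-Max Regret Routing} the path count $k$ is the fixed fleet budget and the maximum regret is minimized. Since regret is the vehicle load and the number of paths is the fleet budget, the overarching algorithm of Section~\ref{sec:preliminaries} applies directly with load parameter $D = R$.

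First I would translate the hypothesis. By assumption there is a feasible solution using $k$ paths whose regrets are all at most $R$; this solution is feasible for the {\sc Min-Max Regret Routing} instance with fleet budget $k$ and witnesses that its optimum satisfies $\OPT \le R$. I would then run the overarching algorithm underlying Theorem~\ref{thm:regret} with the load value set to $D = R$ and the fleet budget held fixed at $k$. By the guarantee of that algorithm, since a solution of maximum regret $R$ exists under the budget $k$ (which the hypothesis supplies), it returns in time polynomial in the input size a feasible solution that still uses at most $k$ paths and whose maximum regret is at most $(1+\ep)R$. No binary search over $D$ is needed here: because the target bound $R$ is given as part of the {\sc School Bus Routing} input, a single invocation at $D = R$ suffices.

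The only point that genuinely requires care is verifying that the fleet budget is honored as a hard constraint throughout, so that the relaxation is confined to the regret while the path count never exceeds $k$. As in the {\sc Minimum Makespan Vehicle Routing} dynamic program of Section~\ref{sec:DP}, every configuration stored by the DP tracks the number of routes and is discarded unless it uses at most $k$ of them; consequently the returned solution uses at most $k$ paths, and the stretch is absorbed entirely by the $(1+\ep)$ factor on the regret. The substantive work thus lies in Theorem~\ref{thm:regret} itself (the structure theorem and dynamic program for regret load); once that is in hand, Theorem~\ref{thm:school_bus} follows with no additional machinery, exactly mirroring how the bicriteria {\sc Minimum Fleet Budget} result stated in the introduction is derived from Theorem~\ref{thm:main}.
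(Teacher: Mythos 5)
Your proposal is correct and matches the paper's (implicit) argument exactly: the paper derives Theorem~\ref{thm:school_bus} as an immediate corollary of Theorem~\ref{thm:regret}, in the same way the bicriteria {\sc Minimum Fleet Budget} result follows from Theorem~\ref{thm:main}, by running the load-$D$ algorithm with $D=R$ under a hard fleet budget of $k$. Your added observations (no binary search needed, the DP discards configurations exceeding $k$ routes) are accurate elaborations of what the paper leaves unstated.
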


\begin{theorem}
There is a polynomial-time 2-approximation for the {\sc School Bus Routing} problem in trees.
\end{theorem}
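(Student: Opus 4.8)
The plan is to derive the 2-approximation from the {\sc Min-Max Regret Routing} PTAS of Theorem~\ref{thm:regret} (equivalently, the bicriteria guarantee of Theorem~\ref{thm:school_bus}), together with a combinatorial step that removes the multiplicative slack in the regret bound at the cost of a factor of two in the number of paths. Let $k^*$ be the optimum, i.e.\ the minimum number of depot-rooted paths of regret at most $R$ that cover all clients, and fix any constant $\ep\le 1$. For each candidate value $k=1,2,\dots,n$ I run the {\sc Min-Max Regret Routing} PTAS with fleet budget $k$; it returns $k$ paths whose maximum regret is within a $(1+\ep)$ factor of the best achievable with $k$ paths. Let $\hat k$ be the smallest $k$ for which the returned solution has maximum regret at most $(1+\ep)R$. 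Since the optimal solution certifies a $k^*$-path solution of regret at most $R\le(1+\ep)R$, the PTAS is guaranteed to succeed at $k=k^*$, so this search terminates with $\hat k\le k^*$ and runs in polynomial time because $\ep$ is a fixed constant.

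At this point I have $\hat k$ depot-rooted paths, each of regret at most $(1+\ep)R\le 2R$. The heart of the argument is the following \emph{splitting claim}: any depot-rooted path (walk) $P$ of regret at most $2R$ can be replaced by at most two depot-rooted paths, each of regret at most $R$, that together cover all clients of $P$. Applying this to each of the $\hat k$ paths yields a feasible {\sc School Bus Routing} solution (every path has regret at most $R$) using at most $2\hat k\le 2k^*$ paths, which is the desired 2-approximation. So the whole proof reduces to establishing the splitting claim.

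To prove the claim I would work with the minimal subtree $\tau$ spanned by $r$ and the clients of $P$, rooted at $r$, and use the fact that a walk starting at $r$, covering a connected subtree $\tau'\ni r$, and ending at a vertex $t'$ has regret exactly $2\big(\ell(\tau')-d_T(r,t')\big)$, that is, twice the length of the edges it traverses that lie off its $r$-to-$t'$ geodesic. Letting $t$ be a deepest client of $P$ and orienting a DFS of $\tau$ so that the branch containing $t$ is explored last, the Euler-tour potential $\phi(s)=s-d_T(r,a(s))$ (where $a(s)$ is the point reached after a prefix of length $s$ of the tour) is continuous and non-decreasing, starts at $0$, and reaches the regret $\rho\le 2R$ of $P$ exactly when the tour first arrives at $t$. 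I would cut at the largest prefix with $\phi(s^*)\le R$: the first path is this prefix walk, whose regret is $\phi(s^*)\le R$ by construction, and the second path covers the remaining clients while ending at the deep endpoint $t$.

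The main obstacle is bounding the regret of this second path by $R$; the first is immediate. For a caterpillar (a spine with pendant clients) this is a short computation: if the cut is made just before the first pendant whose inclusion as a detour would push the prefix regret above $R$, then that pendant becomes a free endpoint, and the suffix regret equals $\rho$ minus the sum of the prefix regret and twice that pendant's length; since that sum exceeds $R$, the suffix regret is strictly below $\rho-R\le R$. The work is in extending this to general branching, where the unexplored remainder can hang off several points of the already-traversed region rather than a single one, so isolating the one ``deep'' part as a free endpoint and distributing the remaining excursions (necessarily small, since they total at most $R$) between the two endpoints requires care. I expect the cleanest route is strong induction on $\lceil\rho/R\rceil$: peel a prefix of regret at most $R$ off the Euler tour and recurse on a residual depot-rooted walk of strictly smaller regret, using the deep-endpoint trick to certify that the residual is again a valid depot-rooted walk. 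Since here $\rho\le 2R$, a single cut suffices and the induction bottoms out after one step.
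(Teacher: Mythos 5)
Your top-level reduction is exactly the paper's: the paper's entire proof consists of invoking Theorem~\ref{thm:school_bus} and then citing a lemma of~\cite{friggstad14} stating that $k$ paths of regret at most $\alpha R$ can be replaced by $\lceil\alpha\rceil k$ paths of regret at most $R$, which with $\alpha = 1+\ep \le 2$ gives the factor of $2$. Your ``splitting claim'' is precisely this lemma with $\alpha=2$; the only genuine difference is that you attempt to prove it from scratch rather than cite it. On that point, you hedge more than you need to: the place you flag as delicate (general branching, ``distributing the remaining excursions between the two endpoints'') is already resolved by the potential function you set up. Since $\phi(s)=s-d_T(r,a(s))$ is continuous and non-decreasing along the Euler walk and reaches $\rho>R$ at the end, there is a point $s^*$ with $\phi(s^*)=R$ exactly. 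The first path is the prefix, with regret $R$. The second path is $r \to a(s^*)$ along the geodesic followed by the \emph{entire} remaining Euler walk to $t$; it is a single connected depot-rooted walk regardless of how many points of the explored region the unexplored remainder hangs off of, and its regret is
\[
\bigl(d_T(r,a(s^*)) + (L-s^*)\bigr) - d_T(r,t) \;=\; \rho - \phi(s^*) \;=\; \rho - R \;\le\; R,
\]
where $L$ is the length of the full walk. No case analysis on branching and no induction is needed; the caterpillar computation you did is the general computation. (You do still need the preliminary observation that replacing the given walk by a DFS of the minimal subtree $\tau$ ending at a deepest client does not increase regret, which your sketch asserts correctly.) So the proposal is correct and, modulo re-proving a cited lemma, follows the paper's route.
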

\begin{proof}
\cite{friggstad14} show that $k$ paths of regret at most $\alpha R$ can be replaced with at most $\lceil\alpha\rceil k$ paths each of regret at most $R$.  Applying this to Theorem~\ref{thm:school_bus} completes the proof.
\end{proof}

The ability of our framework to capture a regret-based load function demonstrates the flexibility of the framework and gives evidence for its potential application to problems outside the scope of this paper. 

It is useful (and equivalent) to assume that each vehicle path \emph{ends} at the depot rather than begins there; we assume each vehicle starts at some origin vertex and travels to the depot $r$, possibly taking some detours on branches along the way.  Note that in tree instances, these \emph{regret detours} exactly correspond to the regret accumulated by the vehicle, whereas traveling along the origin-to-depot path is free.  The biggest challenge with applying the framework to {\sc Min-Max Regret Routing} is the asymmetry of these two aspects of the path.  

\subsection{Structure Theorem}

For {\sc Min-Max Regret Routing}, defining a load function for subgraphs is not as straightforward as for other problems, as the regret accumulated by a path covering subgraph $H$ is highly dependent on where it starts and ends.  If the vehicle is traversing $H$ on its way to the depot, then some of $H$ can be covered for free, namely the intersection of $H$ with the shortest path from the vehicle's origin to the depot.  For this reason, we require $g(\cdot)$ to take two additional parameters, vertices $u,v\in H$, and define $g(H,u,v)$ to be the minimum regret required for a path to cover $H$, given that $u$ and $v$ are respectively the \emph{first} and \emph{last} vertex in $H$ on the path.  This means that $u$ and $v$ are each either end points of the path or on the \emph{boundary} of $H$.

Let $\hat{\ep} = \ep/c$ for some constant $c$ we will define later, and $\delta=\hat{\ep}^2$. We first update the condensing and clustering steps to use this extended load function. For a branch $b$ that is a $u$-branch at $v$, the branch load is measured as $g(b,v,v)$ when applying the {\sc condense} operation.  That is, the regret accumulated by a vehicle covering $b$ and starting \emph{outside of} $b$, or equivalently, twice the length of $b$.  Leaf clusters therefore correspond to branches of length at least $\frac{\hat{\ep}^2}{4}D$ (i.e. load at least $\frac{\hat{\ep}^2}{2}D$).  Let the leaf edges of a woolly subedge be called the \emph{wool}.  Every woolly subedge can therefore be partitioned into backbone and wool.  The load of a woolly subedge $e = (u,v)$ is defined as $g(e,u,v)$, which is twice the length of the wool of $e$ (i.e. twice the sum of the leaf edge lengths).  Therefore, small clusters have total wool length less than $\frac{\hat{\ep}^3}{4}D$ and edge clusters have total wool length in $[\frac{\hat{\ep}^3}{4}D,\frac{\hat{\ep}^2}{4}D]$

The condense operation increases the regret of any path by at most $2\hat{\ep}^2D$.  If any path starts in a branch $b$ that is condensed, it can be assumed instead that the branch starts at the root-most vertex of $b$.  This adds at most $\hat{\ep}^2D$ regret to any path, since each path has exactly one starting point.  The condensed branches are now all covered by \emph{detours}, so the reassignment argument is equivalent to {\sc Minimum Makespan Vehicle Routing}, adding at most an additional $\hat{\ep}^2D$ regret.  Without loss of generality, we assume that \emph{all} paths start at the \emph{top} (rootmost vertex) of some leaf cluster; paths are already assumed to start on the backbone, and the start point can always be extended to a descendant vertex without increasing the path's regret.

As in {\sc Minimum Makespan Vehicle Routing}, small clusters can be assigned (and charged) to descendant edge clusters such that the vehicle covering a given leaf cluster also covers at most two small clusters. We have now accounted for the single coverage of leaf and small clusters. 

Next, consider some edge cluster $C$.  $C$-passing and $C$-collecting routes here consist both of \emph{through paths} that start in a descendant of $C$ as well as \emph{through detours} that do not start in a descendant of $C$ and extend through the cluster to descendant clusters (see Figure~\ref{fig:regret_paths}). As before, regret can be reassigned so that each edge cluster is covered by at most one $C$-collecting route, adding at most $\frac{\hat{\ep}^2}{2}D$ regret to any path.

\begin{figure}[!h]
\captionsetup[subfigure]{justification=centering}
\centering
  \subfloat[\label{fig:regret_collecting}]
    {\includegraphics[width=0.3\textwidth]{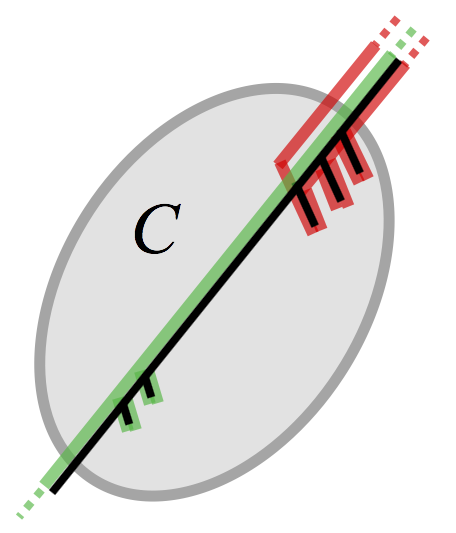}}
        \quad\vline\quad
  \subfloat[\label{fig:regret_through}]
    {\includegraphics[width=0.27\textwidth]{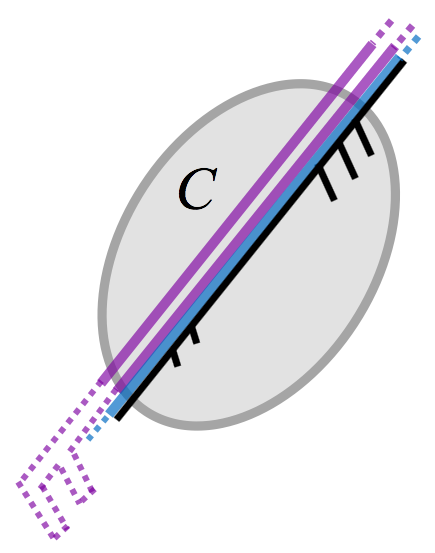}}
\caption{\label{fig:regret_paths} (a) The red path is a $C$-ending path and the green path is a $C$-collecting through path (b) The blue path is a $C$-passing through path and the purple path is a $C$-passing through detour.}
\end{figure}

Additionally, if the largest amount of regret accumulated in $C$ by a $C$-ending route is at most $\hat{\ep}D$, then regret can be reassigned to a single $C$-ending route.  Alternatively, if this amount is more than $\hat{\ep}D$ and is accumulated by $C$-ending route $t$, then all regret from other $C$-ending routes can be \emph{charged} to $t$'s regret in $C$, since this adds at most $\frac{\hat{\ep}^2}{2}D$ to the regret of $t$.

Using proofs analogous to those in Section~\ref{sec:makespan} we can show that there exists a near-optimal solution such that small cluster and edge clusters have single coverage, and edge clusters have split coverage.  All that remains to proving the structure theorem is to show that there exists such a solution such that each vehicle covers clients in a bounded number of clusters.

Bounding the number of leaf clusters and small clusters a vehicle covers is straightforward: leaf clusters have large regret so a vehicle can only afford to cover a small number of them, and each vehicle covers at most 2 small clusters for every leaf cluster.  Edge clusters with either single coverage, or \emph{balanced} split coverage (in which both vehicles cover a large fraction of the cluster's regret) are similarly straightforward to handle (see proof of Theorem~\ref{thm:structure_makespan}).  Moreover, if $C$ has split coverage and the $C$-ending path have very small regret (e.g. less than $\hat{\ep}^4D$), its clients can be reassigned and charged to the $C$-collecting path, which must have a relatively large regret and can afford to singly cover $C$.   

The difficult case is when $C$ has split coverage and the $C$-collecting path has very small regret in $C$. Unlike in Theorem~\ref{thm:structure_makespan}, even though the $C$-ending path has relatively large regret, it cannot afford to singly cover $C$ because the \emph{backbone} may be arbitrarily long.  We say that $C$ has \emph{dispersed} coverage in this case and try to bound the number of clusters with \emph{dispersed} coverage for which any single vehicle is a $C$-collecting path.

Consider some woolly edge $e$, and let $\{C_1,C_2,...,C_l\}$ be the clusters along $e$ that have dispersed coverage, ordered such that $C_{i+1}$ is rootward of $C_i$ for all $i$.  Though the $C_1$-ending path cannot afford to cover the clients covered by the $C_1$-collecting path in $C_1$, it \emph{can} afford to cover the clients covered by the $C_2$-collecting path in $C_2$, since it is necessarily passing through $C_2$ on the way to the root.  In general, the $C_i$-ending path can cover the clients covered by the $C_{i+1}$-collecting path in $C_{i+1}$. Finally, we can assign woolly edge $e$ to a descendant leaf cluster $C_L$ in such a way that each leaf cluster is assigned at most two woolly edges, and charge the regret of the $C_1$-collecting path in $C_1$ to the vehicle that covers $C_L$.  This \emph{propagation reassignment} only increases the maximum regret of a vehicle by a $(1+\hat{\ep})$ factor.  Now, for each cluster $C$ with dispersed coverage, the regret of the $C$-collecting path is charged to a subpath with relatively large regret in a descendant cluster, so we can bound the number of times this occurs for any given vehicle.

\subsection{Dynamic Program}

The dynamic program for the {\sc School Bus Routing} problem must account separately for the through paths and regret detours.  For through paths, a configuration stores the (rounded) regret already accumulated, whereas for regret detours a configuration stores the (rounded) total length.  Of note, a regret detour can be merged with a through path or another regret detour, but two through paths cannot be merged together.

The bigger challenge with adapting the DP is with route projection: a regret detour can be arbitrarily far from the root, so the DP cannot project the detour all the way to the root.  We address this by setting \emph{breakpoints} along the tree such that for every vertex $v$, there is an ancestor breakpoint within a distance $R$, and such that no two breakpoints are within $\ep R$ of each other.  Breakpoints can be defined greedily.  The DP then stores rounded regret detours projected to the next rootward breakpoint.  By construction, when two detours are merged, they agree on breakpoint, and twice the distance to the breakpoint can be subtracted, as with merging a detour and a through path.  When a breakpoint is encountered by the DP, all regret detours are projected to the next rootward breakpoint.  This adds $O(\ep R)$ rounding operations to any given path, and is therefore negligible.

\section{Generalizing to Multiple Depots}\label{sec:multi}

In this section, we sketch how our framework can be extended to allow for multiple depots.  We illustrate this extension using {\sc Minimum Makespan Vehicle Routing}, but note that it can be applied to other variants.  Here, each tour must start and end at the same depot, and a client can be covered by a tour from any depot.

\begin{theorem}
\label{thm:multi}
For every $\ep > 0$ and $\rho > 0$, there is a polynomial-time algorithm that, given an instance of {\sc $\rho$-Depot Minimum Makespan Vehicle Routing} on a tree, finds a solution whose makespan is at most $1+\ep$ times optimum.
\end{theorem}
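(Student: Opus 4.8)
The plan is to reduce the multiple-depot problem to the single-depot framework by introducing a virtual super-root that connects all $\rho$ depots, then carefully accounting for the fact that tours can no longer be freely projected toward a single root. First I would construct an auxiliary tree $T^+$ by adding a new root vertex $r^+$ together with zero-length edges $(r^+, r_1), \ldots, (r^+, r_\rho)$, where $r_1, \ldots, r_\rho$ are the $\rho$ depots. The key structural observation is that in the original instance every tour is a closed walk from \emph{some} depot $r_j$, so in $T^+$ each tour corresponds to a tour through $r^+$ that commits to exactly one of the $\rho$ depot-branches. I would then run the clustering machinery (\textsc{condense}, leaf/small/edge clusters, the backbone) on $T^+$ essentially unchanged, since the load function $g$ and the geometry of the tree are unaffected by adding zero-length edges.

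The structure theorem carries over almost verbatim: the reassignment arguments of Lemmas~\ref{lem:condense}, \ref{lem:small}, and \ref{lem:split} only use connectivity of the covered subgraph to the root and the subadditivity of $g$, and these properties hold equally in $T^+$. The one genuinely new ingredient is that a cluster near the top of the tree may now be served by tours emanating from \emph{different} depots, so when routes merge at a backbone branching vertex the DP must ensure that the two merging tours belong to the same depot. I would therefore augment each configuration with an extra coordinate recording, for each tour, which depot it originates from — but since $\rho$ is a fixed constant, this multiplies the number of configurations only by a factor of $\rho^{O(\hat{\ep}^{-4})}$, which is still polynomial in the input size (with the dependence on $\ep$ and $\rho$ absorbed into the constant). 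Merging is permitted only between same-depot tours, and the subtracted overlap term $2\ell(u,r)$ is replaced by $2\,d_T(u,r_j)$ for the common depot $r_j$.

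The hard part will be the route-projection step that controls rounding error. In the single-depot DP, a tour is projected all the way to $r$ so that it is rounded only $O(1/\hat{\ep}^3)$ times; with multiple depots a $C$-passing tour does not yet know which depot it will ultimately close at, so there is no single canonical destination to project toward. My plan is to handle this by projecting each tour toward the \emph{branching backbone vertex} at which its depot-choice is forced — namely the nearest common ancestor in $T^+$ below which only one depot-branch is reachable — and charging the projection cost relative to $d_T(\cdot, r_j)$ once that vertex is passed. Because each tour still covers clients in only $O(1/\hat{\ep}^3)$ clusters and branches at only $O(1/\hat{\ep}^3)$ backbone vertices (a consequence of the structure theorem applied in $T^+$), the total number of rounding operations per tour remains $O(1/\hat{\ep}^3)$, so setting $\theta = c_\theta \ep \hat{\ep}^3$ keeps the accumulated rounding error at most $\ep D$. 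The remaining estimates are identical to the proof of Lemma~\ref{lem:dp}, and combining the structure theorem error $c_1 \hat{\ep} D$ with the rounding error $c_2 \hat{\ep} D$ and setting $\hat{\ep} = \ep/(c_1 + c_2)$ completes the argument.
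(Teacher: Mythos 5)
There are several genuine gaps here, and the first is fatal to the construction as written: adding a new vertex $r^+$ with edges $(r^+,r_1),\dots,(r^+,r_\rho)$ to a tree that already contains the depots $r_1,\dots,r_\rho$ as vertices creates $\rho-1$ cycles, so $T^+$ is not a tree, and none of the framework (unique parents, subtrees, the backbone, the cluster tree, the DP traversal) applies to it. The instinct to work with a single root is right, but the paper realizes it differently: it keeps $T$ a tree by rooting it at an arbitrary non-depot vertex $r_0$, treating each depot as a leaf, and moving each depot just outside its cluster via an added leaf edge.

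Second, your claim that the structure theorem and in particular Lemma~\ref{lem:small} carry over ``almost verbatim'' because they only use connectivity to the root is exactly where the multi-depot case breaks. That lemma charges each small cluster $C_S$ to a descendant leaf cluster $C_L$, and the charge is valid only because the tour covering $C_L$ \emph{must} pass through $C_S$ on its way to the unique depot; with multiple depots that tour may close at a depot below $C_S$ and never touch it. The paper's fix is nontrivial: small clusters incident to a depot-free branch can still be charged into that branch, and a meta-tree argument shows that only $O(\rho)$ small clusters remain, each costing $\hat{\ep}^2D$ --- which is also why the paper must set $\hat{\ep}=\ep/(c\rho)$, a $\rho$-dependence your final parameter choice $\hat{\ep}=\ep/(c_1+c_2)$ omits. (Relatedly, split coverage must be redefined, since an edge cluster can now be entered by ending tours from both ends, giving three intervals rather than two; you do not address this.) Third, the projection step is not merely ``the hard part'' but is left unresolved: ``the branching vertex at which the depot-choice is forced'' is not well defined once $T^+$ has cycles, and even in a properly rooted tree the depot of a $C$-passing tour cannot be inferred from its position. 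The paper's actual device is to commit each tour to its depot $r_i$ at creation and project it \emph{up} to a sub-root $\bar{r}_i$ (the root-most vertex within distance $D/2$ of $r_i$), merging two depot-$r_i$ segments at $v$ with overlap $2d_T(v,\bar{r}_i)$ and capping a tour at $v$ once its projected length reaches $d_T(v,\bar{r}_i)+(1+\ep)D$. Your depot-labelled configurations are in the right spirit, but without the sub-root projection and the $O(\rho)$ accounting for exceptional small clusters the argument does not go through.
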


\subsection{Structure Theorem}

Label the depots $\{r_1, r_2, ..., r_\rho\}$, and, without loss of generality, assume each depot is a leaf. and that the tree is rooted at some arbitrary vertex $r_0$, rather than at a depot.  The clustering is then done, with respect to $r_0$, identically to the single-depot setting, with $\delta= \hat{\ep}$ to be defined later, and with the load  $g(H)$ of subgraph $H$ again being defined to be twice the length of edges in $H$.

It is more convenient to assume that the depots are \emph{outside of} the clusters.  During the clustering process, if a depot $r_i$ is in a cluster $C$ with root-most vertex $v$, then a leaf edge $(v,r'_i)$ of length $d_T(r_i,v)$ is added to $v$, padded with zero length edges, to maintain a binary tree.  The depot is then assumed to be located at $r'_i$, and thus outside of $C$ (see Figure~\ref{fig:multi_clustering}).  This modification adds at most $\hat{\ep} D$ to the optimal makespan, to account for the extra distance that tours from $r_i$ must travel to reach $r'_i$. Otherwise, the increase to makespan caused by condensing is equivalent to Lemma~\ref{lem:condense} of the single-depot setting.

To show that requiring all small clusters to have single coverage does not increase the optimal makespan by too much is more challenging than in the single-depot case.  Recall that with only one depot, the load from a small cluster $C_S$ could be charged to a descendant leaf cluster $C_L$, because the tour $t_L$ that covers $C_L$ \emph{must} pass through $C_S$ on the way to the root depot.  This is not true for the multi-depot setting.

Observe though that if $C_S$ is incident to a \emph{woolly} branch $b$ (a branch of the condensed tree) that contains no depots, then the load from $C_S$ \emph{can} be charged to a leaf cluster $C_L$ in $b$, since the tour covering $C_L$ must leave branch $b$ to reach its depot, and therefore either pass through or abut $C_S$.  Such small clusters can be assigned to such leaf clusters so that each leaf cluster is charged for at most three small clusters, adding at most $4\hat{\ep}D$ to the makespan, as in Lemma~\ref{lem:small}.  

What remains are the small clusters \emph{not} incident to such depot-free branches.  We claim that there are at most $2\rho$ such clusters.  Note that since we moved depots to be outside of clusters, the cluster tree $T^*$, resulting from contracting clusters, now has vertices for each cluster, depot, branching point, and the root.  Consider the \emph{meta} tree structure $T^*_{meta}$ induced by the depots and the root (see Figure~\ref{fig:multi_meta}).  There are fewer than  $2\rho$ edges of $T^*_{meta}$ since it is a binary tree with depot leaves.  Consider some edge $e_{meta}$ of $T^*_{meta}$.  If $e_{meta}$ \emph{contains} more than one small cluster (i.e. corresponds to a subgraph of $T^*$ with more than one small cluster vertex), those small clusters must be incident to depot-free branches. Otherwise, $e_{meta}$ contains at most one small cluster.  Therefore, there are at most $2\rho$ such small clusters, and requiring them to have single coverages increases optimal makespan by at most $\rho\hat{\ep}^2D$.

\begin{figure}[!h]
\centering
\includegraphics[width=0.6\textwidth]{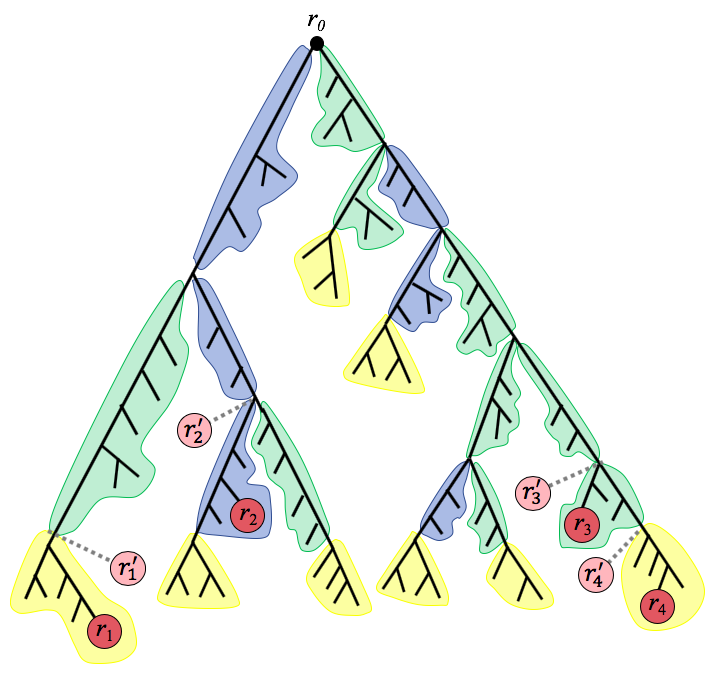}
\caption{\label{fig:multi_clustering} A four-depot clustering example showing depots moved outside of clusters.  Leaf clusters are yellow, small clusters are blue, and edge clusters are green.}
\end{figure}

\begin{figure}[!h]
\captionsetup[subfigure]{justification=centering}
\centering
  \subfloat[\label{fig:multi_t_star}]
    {\includegraphics[width=0.35\textwidth]{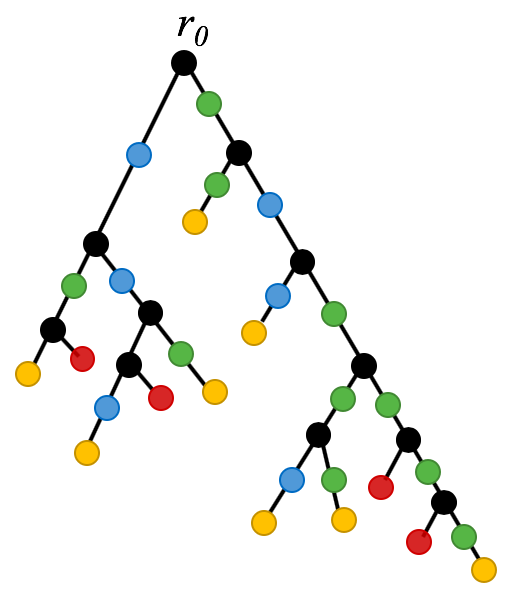}}
        \quad\vline\quad
  \subfloat[\label{fig:multi_meta}]
    {\includegraphics[width=0.4\textwidth]{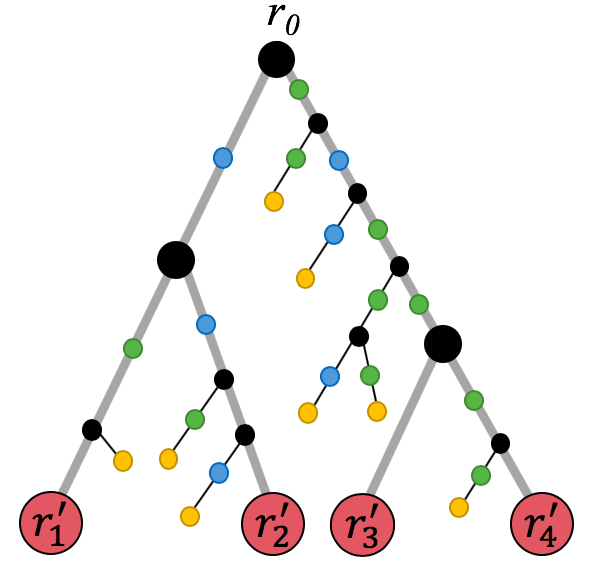}}
\caption{\label{fig:multi_depots} (a) Cluster tree $T^*$ corresponding to Figure~\ref{fig:clustering}; (b) Corresponding meta tree $T^*_{meta}$.  Meta edges are shown as thick gray lines;}
\end{figure}

Last, the definition of split coverage needs to be adjusted to accommodate multiple depots.  In particular an edge cluster $C$ may have $C$-ending tours from \emph{each} end.  We let \emph{top} $C$-ending tours denote those that enter $C$ from the root-ward end and \emph{bottom} $C$-ending tours denote those that enter $C$ from the leaf-ward end.  In the multi-depot setting, the analogous definition of split coverage is that the leaves of $C$ along the backbone can be partitioned into three continuous intervals, such that if the root-most interval is nonempty it is covered by a single top $C$-ending tour, if the middle interval is nonempty it is covered by a single $C$-collecting tour, and if the leaf-most interval is nonempty it is covered by a single bottom $C$-ending tour.  Using analogous analysis to Lemma~\ref{lem:split} shows that requiring edge clusters to have split coverage adds at most $5\hat{\ep} D$ to the optimal makespan.

It can then be shown that there is a solution of makespan  $(1+O(\hat{\ep} + \rho\hat{\ep}^2))D$ such that each tour covers clients in $O(\frac{1}{\hat{\ep}^3} + \rho)$ clusters.  To prove the structure theorem, it then suffices to set $\hat{\ep} = \frac{\ep}{c\rho}$ for an appropriate constant $c$.

\subsection{Dynamic Program}

To adapt the dynamic program for multiple depots, we first note that the configurations must now account for tours \emph{to each depot}. The challenge is that our single-depot dynamic program depends on  projecting vehicle routes \emph{up} the tree, toward a common depot.  In particular, this was the same direction as the DP traversal itself.  In the multiple-depot setting, the DP must move \emph{away from} depots, but it cannot afford to guess which clients to project a tour toward.  To address this issue, for each depot $r_i$ we assign a \emph{sub-root} $\bar{r}_i$ to be the root-most vertex that is within a distance $D/2$ of $r_i$.  In particular, this sub-root is the root-most vertex that can possibly be included in any tour from $r_i$.  

When the DP introduces a new tour to depot $r_i$, it is then projected \emph{up} to $\bar{r}_i$ instead of down to $r_i$.  This includes when the DP processes $r_i$ itself: a configuration describes how many tours start (and end) at $r_i$, and they are all projected from $r_i$ up to $\bar{r}_i$ (see Figure~\ref{fig:multi_projection}).  Two depot-$r_i$ tour segments $t_1$ and $t_2$ can only be merged at some vertex $v$ if $l(t_1)+l(t_2) - 2d_T(v,\bar{r}_i) \leq (1+\ep)D$.  After merging, they are stored as a depot-$r_i$ tour segment of length $l(t_1)+l(t_2) - d_T(v,\bar{r}_i)$ rounded up to the nearest $\hat{\ep}^{4}D$ (see Figure~\ref{fig:multi_merge}).  If at some vertex $v$, the (rounded) projected tour length reaches $d_T(v,\bar{r}_i) + (1+\ep)D$, the projected subtour from $v$ to $\bar{r}_i$ is removed and that the tour ends at $v$ (see Figure~\ref{fig:multi_cap}). In other words, $v$ is the most root-most vertex reached by that tour. The overall runtime is $n^{O(\rho\ep^{-8})}$.

\begin{figure}[!h]
\captionsetup[subfigure]{justification=centering}
\centering
  \subfloat[\label{fig:multi_projection}]
    {\includegraphics[width=0.25\textwidth]{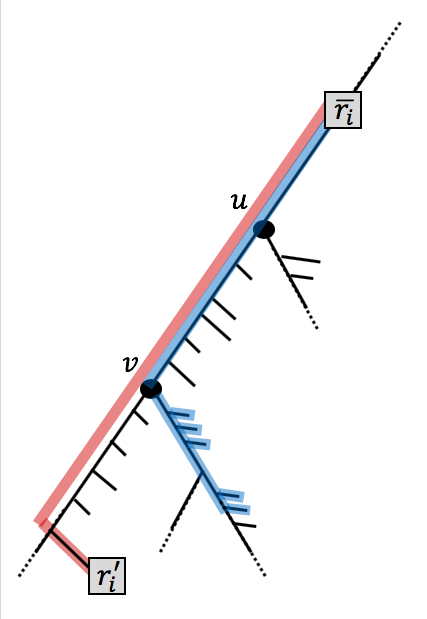}}
        \quad\vline\quad
  \subfloat[\label{fig:multi_merge}]
    {\includegraphics[width=0.25\textwidth]{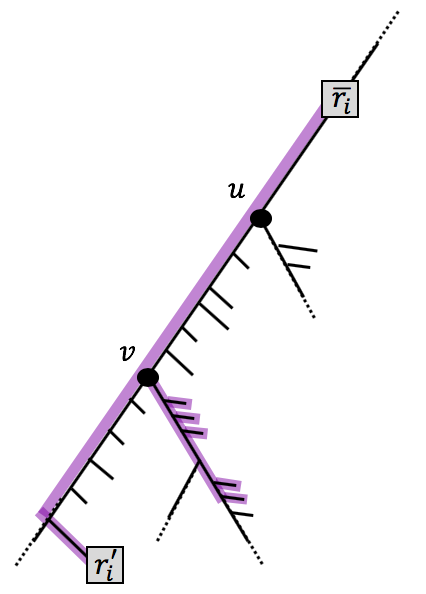}}
        \quad\vline\quad
  \subfloat[\label{fig:multi_cap}]
    {\includegraphics[width=0.25\textwidth]{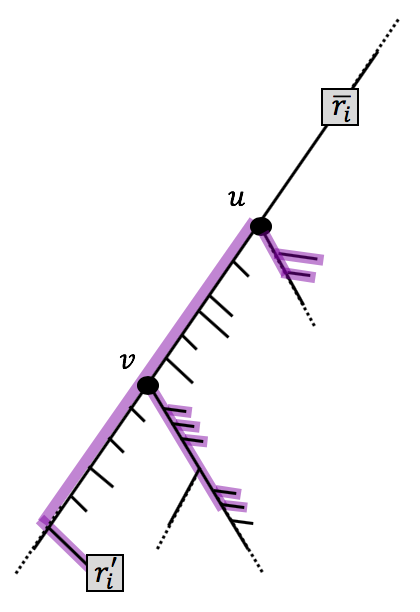}}
\caption{\label{fig:multi_depots} (a) Tours to $r_i$ are projected to $\bar{r}_i$; (b) Tours to $r_i$ from (a) merged at vertex $v$ and projected to $\bar{r}_i$; (c) Tour to $r_i$ capped at vertex $u$ once tour length reaches $(1+\ep)D$;}
\end{figure}

\section{Conclusion}\label{sec:conclusion}

In this paper, we present a general framework that yields PTASs for several classic vehicle routing problems on trees including {\sc Minimum Makespan Vehicle Routing}, {\sc Distance-Constrained Vehicle Routing}, {\sc Capacitated Vehicle Routing}, and {\sc School Bus Routing}. The breadth of these problems highlights the  flexibility and usefulness of the framework, especially given its ability to capture distance-based, capacity-based, and regret-based load functions. Further, we believe this framework can serve as a tool for applications beyond the scope of this paper. 

%
%
\bibliographystyle{plain}
\bibliography{main}

\appendix

\section{Chen and Marx's Previous Claim to a Minimum Makespan PTAS.}
\label{app:chen_marx}

The intuition behind our result is similar to that of~ \cite{chen_marx}: since the main challenge arises from having to account for small tour segments, modify the instance so that only large tour segments need to be accounted for.  Because each tour can only have a small number of these larger tour segments, the algorithm can essentially \emph{guess} all possible ways that the segments can be joined to form tours.  The difficulty with this approach is in showing that this type of grouping of small segments into large segments can be done without incurring too much error.
As did~\cite{chen_marx}, we also observed that assuming each subtree with length less than $\ep OPT$ to be covered by a single tour increases the makespan by $O(\ep OPT)$, where $OPT$ is the optimum makespan.

The presented argument of Theorem 2.1~\cite{chen_marx} (which their PTAS requires) depends on the following argument, though they use different terminology: Let $T_v$ denote the subtree rooted at $v$.  The input can be \emph{safely} modified\footnote{\emph{Safe} graph modifications are those that result in an equivalent instance and consist  of subdividing a single edge $e_0$ into two smaller edges $e_1$ and $e_2$ such that $\ell(e_1) + \ell(e_2) = \ell(e_0)$.} so that there exists a set of vertices $CR$ such that the subtrees rooted at vertices in $CR$ collectively \emph{partition} the leaves of the tree and such that $CR$ has the following properties:

\begin{enumerate} 
\item For each $v\in CR$, $T_v$ is \emph{small}: for each child $u$ of $v$, the length of $T_u$ is less than $\ep OPT$.
\item For each $v\in CR$, $T_v$ is \emph{large}: the length of $T_v$ is at least $\ep OPT$.
\item The vertices of $CR$ are \emph{independent}: no two distinct vertices in $CR$ have an ancestor-descendant relationship.
\end{enumerate}

Each of these properties is vital to the proof of Theorem 2.1~\cite{chen_marx}: Property 1 bounds the solution error, Property 2 ensures polynomial runtime, and Property 3 ensures solution feasibility.

Given such a set $CR$, Chen and Marx~\cite{chen_marx} prove that there exists a near-optimal solution in which each \emph{branch} of a subtree rooted at a vertex in $CR$ is covered by a single tour.  They do so by showing that an optimal solution $SOL$ can be transformed into a near-optimal solution $SOL'$ in which each such branch is wholly covered by one of the tours that partially covers it in $SOL$.  This proof depends on Property 1 to bound the resulting error of this transformation and on Property 3 to maintain solution feasibility while independently reassigning branch coverage.  A dynamic programming strategy can then be used to find such a near-optimal solution, which requires Property 2 in order to achieve a polynomial runtime.

The problem, though, is that no such set $CR$ is guaranteed to exist. In the process of modifying the input graph to ensure that their proposed set $CR$ satisfies Property 3, they end up with a set that no longer satisfies Property 2.  

In fact, it is not difficult to show the following.

\begin{lemma}
\label{lem:chen_marx}
There exist instances in which no such set $CR$ as described above exists.
\end{lemma}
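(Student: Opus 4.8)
The plan is to exhibit an explicit family of instances and prove that the three properties are mutually unsatisfiable by a counting/volume argument on the conflict between Property~2 (largeness, needed for polynomial runtime) and Property~3 (independence, needed for feasibility). The intuition driving the construction is exactly the failure mode described in the prose: when one tries to subdivide edges so that a candidate set $CR$ becomes independent, the largeness requirement forces the subtrees rooted at $CR$ to absorb length $\geq \ep\,\OPT$ each, but the geometry of the tree can be arranged so that making subtrees independent necessarily splinters them into pieces that are too small.

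Concretely, I would build a ``caterpillar''-style tree: a long backbone path from the root, with many leaf edges (the ``legs'') hanging off of it, where each individual leg and each short backbone segment has length far below $\ep\,\OPT$, but the total length is large. Because the leaves are the clients, any valid $CR$ must have its rooted subtrees \emph{partition} the leaves. The key tension is this: to satisfy Property~1 (smallness), each $v \in CR$ must sit low enough that each child subtree has length $< \ep\,\OPT$; to satisfy Property~2 (largeness), $T_v$ itself must have length $\geq \ep\,\OPT$, so $v$ must sit reasonably high and aggregate many legs; but to satisfy Property~3 (independence), the chosen vertices cannot lie on a common root-to-leaf path. On a caterpillar, however, accumulating $\geq \ep\,\OPT$ worth of legs into a single subtree $T_v$ forces $v$ high up the backbone, and then \emph{any} other vertex $w\in CR$ needed to cover the remaining legs is either a descendant of $v$ (violating Property~3) or forces $T_v$ to also contain $w$'s legs, which can be arranged to violate Property~1. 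The allowed ``safe'' modifications only subdivide existing edges and so cannot create new branching structure, meaning they cannot break this ancestor--descendant entanglement.

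The steps I would carry out, in order: \textbf{(i)} fix the instance parameters (number of legs, individual leg length, backbone segment lengths, number of vehicles $k$) so that $\OPT$ is pinned down and each leg/segment is $o(\ep\,\OPT)$ while the whole tree has length $\Theta(\OPT)$; \textbf{(ii)} observe that safe modifications preserve the cyclic/branching topology, so it suffices to argue over all subdivisions of this fixed topology simultaneously; \textbf{(iii)} take an arbitrary candidate $CR$ and a vertex $v\in CR$ whose subtree realizes the largeness bound, and argue by the partition requirement that some leaf not under $v$ must be covered by another $w\in CR$; \textbf{(iv)} do the case analysis showing that on the caterpillar, the backbone location forced by Property~2 for each such $v$ makes every choice of $w$ either create an ancestor--descendant pair (contradicting Property~3) or shrink the available large subtree below $\ep\,\OPT$ (contradicting Property~2), while pushing $v$ low enough to fix this violates Property~1. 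This yields the contradiction.

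I expect the main obstacle to be step~\textbf{(iv)}: making the case analysis airtight against \emph{all} safe edge-subdivisions at once, rather than just the unsubdivided tree. The subtlety is that subdivision gives the adversary (the set-builder) a continuum of placements for the vertices of $CR$ along the backbone, so the argument must show that no placement escapes the trap, i.e. that the length budget $\ep\,\OPT$ is genuinely incompatible with the leg granularity no matter how finely edges are cut. The cleanest way to handle this is to phrase the obstruction as an invariant that subdivision cannot change: the total number of ``branch points'' (backbone vertices with a leg) is fixed by the topology, each contributes a fixed small length, and a subtree rooted anywhere must contain an integer number of these legs, so its length jumps in quanta that straddle the $\ep\,\OPT$ threshold in a way that cannot simultaneously satisfy smallness for the children and largeness for the root while keeping the roots independent. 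I would isolate this quantization observation as the crux and keep the surrounding bookkeeping minimal.
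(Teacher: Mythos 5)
Your proposal is correct and follows essentially the same route as the paper: the paper's counterexample is a central path with many small side subtrees and a large subtree at the bottom, of which your caterpillar is a special case, and both arguments derive the contradiction from the same chain (no side-hanging vertex can satisfy largeness, so independence leaves at most one backbone vertex in $CR$, which must sit at the top to cover the first hanging, whereupon its downward child subtree violates Property~1), with safe subdivisions dismissed in the same way. The only stylistic difference is that your closing ``quantization'' framing is unnecessary---the contradiction needs no length-granularity invariant, just the partition/independence/largeness conflict you already lay out in your case analysis.
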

\begin{proof}
We describe an instance that neither admits such a set $CR$ nor can be modified using \emph{safe} graph modifications into an equivalent instance that admits such a set. Let $T$ be a tree with root $r$ that consists of a \emph{central} path $P = r, v_1, v_2, ..., v_{l}$, \emph{side} subtrees $T_i$ for $i \in \{1,2,...,l\}$ such that the root of $T_i$ is a child of $v_i$, and \emph{main} subtree $T_{l+1}$ whose root $v_{l+1}$ is a child of $v_l$. Let edges $(v_i,v_{i+1})$ along $P$ have small nonzero lengths, let each side subtree $T_i$ have total length less than $\frac{\ep OPT}{2}$ and let the main subtree $T_{l+1}$ have total length at least $\frac{\ep OPT}{2}$ (see Figure~\ref{fig:counter}).  In order to satisfy Property 2, no vertex $v$ in any side subtree $T_i$ for $0<i\leq l$ can be in $CR$, since the subtree rooted at such a $v$ is not large enough.  So, since the subtrees rooted at vertices in $CR$ partition the leaves of $T$, at \emph{least} one vertex in $P$ must be in $CR$ to cover the leaves of trees $T_i$.  At the same time, in order to satisfy Property 3, at \emph{most} one vertex in $P$ can be in $CR$.  But, since main subtree $T_{l+1}$ is large, no choice of vertex in $P$ can satisfy Property 1.

Note that the side and main subtrees are general, and can be assumed to be defined after any safe modifications occur.  Additionally, if any edges of $P$ are subdivided, the argument above clearly still holds.

\begin{figure}[!h]
\centering
      \includegraphics[width=.5\textwidth]{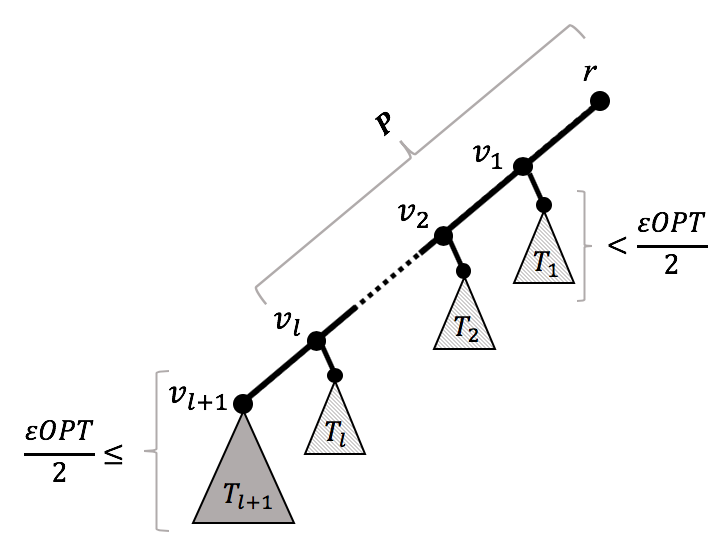}
\caption{\label{fig:counter} Counterexample to~\cite{chen_marx}.}
\end{figure}
\end{proof}

The counterexample in the above proof allows an arbitrarily complex and heavy main subtree $T_{l+1}$, so it cannot be argued that only trivial examples fail to admit a set $CR$.  In particular, this pattern of light side subtrees can continue throughout the main subtree, so these \emph{difficult areas} cannot be easily dealt with separately.

To the best of our knowledge, there is no straightforward way to salvage the theorem used in \cite{chen_marx}.  In fact, the crux of the problem is highlighted by the \emph{absence} of such a set $CR$: How can the input be simplified so as to balance error with runtime given that coverage choices are \emph{not} independent?  Our approach uses similar intuition to that of~\cite{chen_marx}, but manages to successfully address the critical challenges of the problem where prior attempts have fallen short.

Rather than partition the tree uniformly into subtrees, we recognize that the behavior of solutions \emph{looks} differently near the leaves than it does along internal vertices.  As such, our framework partitions the \emph{entire} tree into three different types of \emph{clusters} (see Figure~\ref{fig:clusters}).  \emph{Leaf clusters} are defined greedily so as to satisfy properties similar to 1-3 above and serve to anchor the cluster structure.  \emph{Small clusters} are internal clusters with weight (and frequency) small enough to be effectively ignored while incurring only a small error that can be \emph{charged} to the leaf clusters.  The remainder of the tree is grouped into \emph{edge clusters}, with properties similar to 1 and 2 above: their weight is \emph{small} enough to assume, without incurring too much error, a \emph{simple} structure to the way that tours cover them, yet they are \emph{large} enough so that any tour can only cover a bounded number of them. Though the edge clusters themselves do \emph{not} maintain independence, we show a weaker, sufficient property exhibited by tour segments in these clusters that ensures tour connectivity while treating edge clusters independently.

\end{document}